\newcommand{\comment}[1]{}
\newcommand{\mm}{M_{n,k,\alpha}(\Pi)}
\newcommand{\opt}{\textnormal{opt}}
\newcommand{\bases}{\textnormal{bases}}
\newcommand{\IS}{\textnormal{IS}}
\newcommand{\cA}{{\mathcal{A}}}
\newcommand{\cB}{{\mathcal{B}}}
\newcommand{\cQ}{{\mathcal{Q}}}
\newcommand{\cG}{{\mathcal{G}}}
\newcommand{\cI}{{\mathcal{I}}}
\newcommand{\cD}{{\mathcal{D}}}
\newcommand{\cm}{{\mathcal{M}}}
\newcommand{\cF}{{\mathcal{F}}}
\newcommand{\cJ}{{\mathcal{J}}}
\newcommand{\fs}{f_{\textnormal{SAT}}}
\newcommand{\io}{|I|^{O(1)}}
\newcommand{\cP}{{\mathcal{P}}}
\newcommand{\cC}{{\mathcal{C}}}
\newcommand{\OPT}{\textnormal{OPT}}
\newcommand{\eps}{{\varepsilon}}
\newcommand{\problem}[2]{
	\noindent
	\begin{tabular}{ p{0.8in} p{5.4in} }
		\\
		\hline
		\multicolumn{2}{c}{#1} \\
		\hline
		#2\\
		\hline
		\\
	\end{tabular}
}
\newtheorem{lemma}{Lemma}[section]
\newtheorem{definition}[lemma]{Definition}
\newtheorem{theorem}[lemma]{Theorem}
\newtheorem{observation}[lemma]{Observation}
\newtheorem{claim}[lemma]{Claim}
\crefname{claim}{claim}{claims}
\def \II   {{\mathcal I}}
	\crefname{claim}{claim}{claims}
		\crefname{cor}{corollary}{corollaries}
\begin{document}

	
		\title{
			  Lower Bounds for Matroid Optimization Problems with \\ a Linear Constraint
		}
		
		\author{Ilan Doron-Arad\thanks{Computer Science Department, 
				Technion, Haifa, Israel. \texttt{idoron-arad@cs.technion.ac.il}}
			\and
			Ariel Kulik\thanks{CISPA Helmholtz Center for Information Security, Saarland Informatics Campus, Germany. \texttt{ariel.kulik@cispa.de}} 
			\and
			Hadas Shachnai\thanks{Computer Science Department, 
				Technion, Haifa, Israel. \texttt{hadas@cs.technion.ac.il}}
		}
		\maketitle
		\thispagestyle{empty}
		
		\begin{abstract}
			
We study a family of {\em matroid optimization problems with a linear constraint} (MOL). 
In these problems, we seek a subset of elements which optimizes (i.e., maximizes or minimizes) a linear objective function subject to (i) a matroid independent set, or a matroid basis constraint, (ii) additional linear constraint. A notable 
member in this family is {\sc budgeted matroid independent set (BM)}, which  can be viewed as classic $0/1${\sc-knapsack} with a matroid constraint. While special cases of BM, such as  {\sc knapsack with cardinality constraint} and {\sc multiple-choice knapsack}, admit a {\em fully polynomial-time approximation scheme} (Fully PTAS), the best known result for BM on a general matroid is an {\em Efficient} PTAS. Prior to this work, the existence of a Fully PTAS for BM, and more generally, for any problem in the  family of MOL problems, 
has been open.

In this paper, we answer this question negatively by showing that none of the (non-trivial) problems in this family admits a Fully PTAS. This resolves the complexity status of several well studied problems. Our main result is obtained by showing first that {\sc exact weight matroid basis (EMB)} does not admit a pseudo-polynomial time algorithm. This distinguishes EMB from the special cases of $k$-{\sc subset sum} and EMB on a linear matroid, which are solvable in pseudo-polynomial time.
We then obtain unconditional hardness results for the family of MOL 
problems in the oracle model (even if randomization is allowed), and show that the same results hold when the matroids are encoded as part of the input, assuming $\textnormal{P} \neq \textnormal{NP}$. For the hardness proof of EMB, we introduce the $\Pi$-{\em matroid} family. This intricate subclass of matroids, which exploits the interaction 
between a weight function and the matroid constraint, may find use in tackling other matroid optimization problems.

		\end{abstract}
		
		\newpage

		\setcounter{page}{1}
		
			\section{Introduction}
	\label{sec:intro}
	
	Matroids are simple combinatorial structures, providing
	a unified abstraction for independence systems such as linear independence in a vector space, or cycle-free subsets of edges in a given graph. A {\em matroid} is a set system $(E, \cI)$, where $E$ is a finite set and $\cI \subseteq 2^E$ are the {\em independent sets (IS)} such that $(i)$ $\emptyset \in \cI$, $(ii)$ for all $A \in \cI$ and $B \subseteq A$ it holds that $B \in \cI$, and $(iii)$ for all $A,B \in \cI$ where $|A| > |B|$, there is $e \in A \setminus B$ such that $B \cup \{e\} \in \cI$.\footnote{Properties $(ii)$ and $(iii)$ are known as {\em hereditary property},
		and  {\em exchange property}.}
	
	While serving as a generic abstraction for numerous applications, matroids possess useful combinatorial properties that allow the development of efficient algorithms. 
	These algorithms include such canonical results as the classic greedy approach for finding a maximum weight independent set of a matroid (see, e.g., \cite{CLRS22}), Edmond's algorithm for matroid partitioning \cite{edmonds1965minimum}, and Lawler's algorithm for matroid intersection \cite{La75}. In all of the above, polynomial running time is enabled due to the structure of the problem $-$ a single objective function with a matroid constraint. However, in many natural applications, there is an added {\em linear} constraint. 
	
	Consider, for example, the problem of finding a maximum independent set in a matroid subject to a budget constraint. Formally, we are given 
	a set of elements $E$, a membership oracle for a collection of independent sets $\II \subseteq 2^{E}$ of a matroid $(E, \II)$, 
	a budget $L>0$, a weight function~$w:E\to \mathbb{R}_{\geq 0}$, and a value function $v:E\rightarrow \mathbb{R}_{\geq 0}$. A {\em solution} for the problem is an independent set $S \in \II$ of total weight at most $L$, i.e., $w(S) \leq L$.\footnote{For every set $X$, a function $f:X \rightarrow \mathbb{R}_{\geq 0}$ and $Y \subseteq X$ we define $f(Y) = \sum_{e \in Y} f(e)$.} The {\em value} of a solution $S$ is given by $v(S)$,
	and the objective is to find a solution of maximum value. This problem, known as {\sc budgeted matroid independent set (BM)}, is a generalization of the classic $0/1${\sc-knapsack}, which is NP-hard
	and therefore unlikely to admit an exact polynomial-time algorithm. Thus, obtaining efficient approximations has been a main focus in the study of BM.

 For an instance $I$ of an optimization problem $\cG$, let $\OPT_{\cG}(I)$ be the value of an optimal solution for $I$. For some $\rho \geq 1$, a {\em $\rho$-approximate solution} $S$ for $I$ is a solution of value $v \geq \frac{\OPT_{\cG}(I)}{\rho}$ ($v \leq \rho \cdot \OPT_{\cG}(I)$) if $\cG$ is a maximization (minimization) problem. 
 We say that $\cA$ is a {\em randomized $\rho$-approximation algorithm} for $\cG$ if,
given an instance $I$ of $\cG$, $\cA$
returns with probability at least $\frac{1}{2}$ a $\rho$-approximate solution for $I$ $-$ if a solution exists. If no solution exists $-$ $\cA$ returns that $I$ does not have a solution.

	Let $|I|$ be the encoding size of an instance~$I$ of a problem $\cG$. 
	A (randomized) {\em polynomial-time approximation scheme} (PTAS)
	for $\cG$ is a family of algorithms $(A_{\eps})_{\eps>0}$ such that, for any $\eps>0$, $A_{\eps}$ is a (randomized) polynomial-time $(1+ \eps)$-approximation algorithm for $\cG$. 
	A (randomized) {\em Efficient PTAS} (EPTAS) is a (randomized) PTAS $(A_{\eps})_{\eps>0}$ with running time of the form $f\left(\frac{1}{\eps}\right) \cdot |I|^{O(1)}$, where $f$ is an arbitrary computable function. 
	The strong dependence of run-times on the error parameter, $\eps > 0$, often renders the above schemes highly impractical. This led to the study of the following more desirable class of schemes. A (randomized) approximation scheme 
	$(A_{\eps})_{\eps>0}$ is a (randomized) {\em Fully PTAS} (FPTAS) if the running time of $A_{\eps}$ is of the form $ {\left(\frac{|I|}{\eps}\right)}^{O(1)}$.\footnote{To better distinguish between EPTAS and FPTAS, we use throughout the paper Efficient PTAS and Fully PTAS.}
	
In the past decades, BM was shown to admit a PTAS~\cite{BBGS11,CVZ11,GZ10}, and more recently an Efficient PTAS \cite{DKS23,DKS23b}. As the special case of
$0/1${\sc-knapsack} admits a Fully PTAS, it is natural to explore the existence of a Fully PTAS for BM. There are known Fully PTASs  for BM on restricted families of matroids. This includes {\sc knapsack with a cardinality constraint}~\cite{caprara2000approximation}, 
{\sc multiple-choice knapsack}~\cite{La79}, and {\sc BM with laminar matroid constraint} \cite{DKS23c}. However, the question of whether BM admits a Fully PTAS on general matroids remained open. 

In this paper, we resolve this question negatively for BM and other fundamental matroid optimization problems with a linear constraint. 

\subsection{Our Results}

For a matroid $\cm = (E,\cI)$ we define $\IS(\cm) = \cI$ and $\bases(\cm) = \{S \in \cI~|~|S| = \textsf{rank}(\cm)\}$, where $\textsf{rank}(\cm) = \max_{T \in \cI} |T|$ is the {\em rank} of $\cm$, i.e., the maximum cardinality of an independent set. We study a family of {\em matroid optimization problems with a linear constraint} (MOL). Problems in this family are characterized by three parameters: 
\begin{enumerate}
\item[(i)]
The optimization objective $\opt$ $-$ either the operator $``\max"$ or $``\min"$. 
\item[(ii)]
A {\em matroid feasibility constraint} $\cF$ $-$ either the independent sets of a matroid, or the set of {\em bases} of a matroid. The feasibility constraint is $\cF \in \{\IS,\bases\}$.
\item[(iii)]
A relation $\triangleleft$ $-$  realized by one of the relations $``\geq"$ or $``\leq"$. 
\end{enumerate}

Let $\cP = \{\max,\min\} \times \{\bases,\IS\} \times \{\leq ,\geq\}$ be the set of {\em parameters} for MOL problems. 
Based on the set of parameters $\cP$, we define for every triplet a problem
in the MOL family. For $P \in \cP$ where $P = \left(\opt,\cF,\triangleleft\right)$, define the $P$-{\sc matroid optimization with a linear constraint} ($P$-MOL) problem as follows. An instance is a tuple $I = (E,\cI,v,w,L)$ such that $\cm = (E,\cI)$ is a matroid, $v:E \rightarrow \mathbb{R}_{\geq 0}$ is the objective function, $w:E \rightarrow \mathbb{R}_{\geq 0}$ is a weight function, and $L \in \mathbb{R}_{\geq 0}$ is a {\em bound} for the linear constraint. 
A {\em solution} of $I$ is $S \subseteq E$ which satisfies the matroid feasibility constraint $S \in \cF(\cm)$ and the linear constraint $w(S) \triangleleft L$. The goal is to optimize (i.e., maximize or minimize) the value $v(S)$. Thus, we can formulate a P-MOL optimization problem as
\begin{equation}
\label{eq:0}
\opt~ v(S) \text{ s.t. } S \in \cF(\cm),~  w(S) \triangleleft L.
\end{equation}

Observe that $(\max,\IS,\leq)$-MOL is the BM problem.
That is, given a BM instance (equivalently, $(\max,\IS,\leq)$-MOL instance) $I = (E,\cI,v,w,L)$, the goal is to find an independent set $S \in \cI$ of maximum total value $v(S)$ such that $w(S) \leq L$. Other notable examples for MOL problems are {\sc constrained minimum basis of a matroid} (CMB) \cite{HL04}, which can be cast as $(\min,\bases,\leq)$-MOL, and {\sc knapsack cover with matroid constraint} (KCM) \cite{CCNR13} formalized by $(\min,\IS,\geq)$-MOL.\footnote{CMB, KCM, and other MOL problems may not have a solution; however, we can decide in polynomial time if a solution exists, and our definition of approximation algorithms captures instances with no solution.}

We note that \eqref{eq:0} does not refer to the {\em representation} of the instance $I$. We consider two possible representations.
For any $P \in \cP$, in an instance $(E,\cI,v,w,L)$ of {\em oracle} $P$-MOL, the arguments
$E,v,w,L$ are given as the input, and the independent sets $\cI$ are accessed via a {\em membership oracle}, which determines whether a given set $S \subseteq E$ belongs to  $\cI$ in a single query. Thus, the independent sets are not considered in the encoding size of the instance. The term {\em running time} for problems involving oracles refers to the sum of the number of queries to the oracle and the number of basic operations. 
Previous works on MOL problems often consider membership oracles \cite{CVZ11,BBGS11,CCNR13,DKS23,DKS23b}. As hardness with oracles does not necessarily imply hardness in non-oracle models (see, e.g., \cite{canetti2004random,chang1994random}), in \Cref{sec:SAT} we show lower bounds for variants of MOL problems in which the independent sets are encoded as part of the input.

Clearly, the problem $(\min,\IS,\leq)$-MOL is trivial since the empty set achieves the optimal objective value. 
However, for any other $P \in \cP$, solving the $P$-MOL problem is challenging.
The {\em non-trivial} MOL problems are all the MOL problems excluding $(\min,\IS,\leq)$-MOL. That is, $P$-MOL is non-trivial if $P\in \cQ$ where  $\cQ = \cP \setminus \{(\min,\IS,\leq)\}$.
 Observe that non-trivial MOL problems are NP-hard (e.g., $0/1${\sc-knapsack} is a special case of $(\max,\IS,\leq)$-MOL);
however, 
all previously studied MOL problems admit approximation schemes.

For certain special cases of MOL problems, e.g., BM with simple matroid constraints,
the existence of a Fully PTAS is known for decades \cite{Si79,caprara2000approximation}.
However, for MOL problems with arbitrary matroid constraints, the best known results are Efficient PTAS.  While matroids form an important generalization of well known basic
constraints,
the complexity of the corresponding MOL problems remained open. Specifically, prior to this work, the existence of a MOL problem which does not admit a Fully PTAS was open.

Our main result is that none of the (non-trivial) 
oracle matroid optimization problem with a linear constraint admits 
a Fully PTAS, even if randomization is allowed. This unconditioned hardness result is established by deriving a lower bound on the minimum number of queries to the membership oracle. 
 \begin{restatable}{theorem}{thmMain}
	\label{thm:main}
	For every $P \in \cQ$ there is no randomized  \textnormal{Fully PTAS} for oracle $P$-\textnormal{MOL}. 
\end{restatable}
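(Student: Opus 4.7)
The plan is to prove Theorem~\ref{thm:main} by a two-stage argument. In the first stage, I would establish an oracle-query lower bound for \textsc{Exact Weight Matroid Basis} (EMB): given an oracle matroid $\mathcal{M}=(E,\mathcal{I})$, integer weights $w:E\to\mathbb{Z}_{\geq 0}$, and a target $L\in\mathbb{Z}_{\geq 0}$, decide whether $\mathcal{M}$ has a basis $B$ with $w(B)=L$. In the second stage, I would reduce EMB to $(1+\eps)$-approximating each of the seven non-trivial problems $P\in\cQ$, choosing $\eps$ inversely polynomial in $|I|$ and $L$ so that a Fully PTAS for $P$-MOL would solve EMB with too few queries.

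For the first stage I would construct a parameterized family of matroids $\{\mathcal{M}_\pi\}_{\pi\in\Pi}$ on a shared ground set $E$, equipped with a fixed weight function $w$ and target $L$, such that (i)~each $\mathcal{M}_\pi$ is a bona fide matroid; (ii)~a basis of weight exactly $L$ exists in $\mathcal{M}_\pi$ iff $\pi$ belongs to a designated ``yes'' subset $\Pi_{\text{yes}}\subseteq\Pi$; and (iii)~for $\pi$ drawn uniformly at random, any fixed sequence of $t$ membership queries receives, with probability at least $1-t\cdot|\Pi|^{-\Omega(1)}$, the answers dictated by a universal default rule independent of whether $\pi\in\Pi_{\text{yes}}$. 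Property (iii) is where the coupling between the weight function and the matroid structure of $\Pi$-matroids is essential: a query can reveal information about $\pi$ only through a rare ``accidental'' configuration between the queried set and the weights. A Yao-style minimax argument then forces any randomized oracle algorithm for EMB to issue $|\Pi|^{\Omega(1)}$ queries; choosing $|\Pi|$ super-polynomial in $|I|\cdot L$ gives a super-polynomial oracle lower bound even relative to the bit-length of $L$.

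In the second stage, for each $P=(\opt,\mathcal{F},\triangleleft)\in\cQ$ I would turn an EMB instance into a $P$-MOL instance on the same matroid with weight function $w$ and bound $L$, and design a value function $v$ so that the $P$-MOL optimum equals a precomputable reference value $V^{*}$ iff EMB is a yes-instance, while otherwise differing from $V^{*}$ by at least one. For $(\max,\bases,\leq)$ this is $v=w$ with $V^{*}=L$, so the optimum is $\max\{w(B):w(B)\leq L\}$, equal to $L$ iff yes; $(\min,\bases,\geq)$ is symmetric. The remaining $\bases$-variants are handled by replacing $v$ with $v(e)=W-w(e)$ for $W>\max_e w(e)$, which flips the direction of the objective while preserving the unit gap. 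The $\IS$-variants reduce to the $\bases$-cases by augmenting $v$ with a large term that forces any near-optimal solution to be a basis. Integrality of $w$ guarantees a unit separation between yes- and no-cases, so taking $\eps<1/(2V^{*})$ suffices for a $(1+\eps)$-approximation to decide EMB; substituting into the assumed Fully PTAS runtime $(|I|/\eps)^{O(1)}$ yields an oracle EMB algorithm issuing $\textnormal{poly}(|I|,L)$ queries, contradicting the stage-one lower bound.

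The main obstacle is the construction of the $\Pi$-matroid family in stage one. It must simultaneously obey the matroid exchange axiom, embed an exponentially large hidden parameter $\pi$ whose value governs whether a weight-$L$ basis exists, and remain oracle-indistinguishable under any polynomial number of membership queries. Most natural attempts either produce matroid families with single ``revealing'' queries or sacrifice the exchange property to hide information; engineering a genuine matroid structure whose independent sets interact with $w$ in the precise way required to enforce both a yes/no dichotomy and oracle opacity is the technical heart of the paper, which is why the $\Pi$-matroid family is singled out as the central new object.
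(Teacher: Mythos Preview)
Your two-stage plan matches the paper's architecture exactly: first an oracle lower bound for EMB via a family of ``secret-carrying'' matroids (the paper's $\Pi$-matroids), then a gap-creating reduction from EMB to each $P$-MOL so that a $(1+\eps)$-approximation with $\eps$ inverse-polynomial in $|I|$ and $T$ decides EMB. Your description of stage one as a Yao-style indistinguishability argument is accurate in spirit; the paper's actual proof is a direct counting argument (run the algorithm against $\Pi=\emptyset$, observe it can query at most $t$ sets, and since $|\mathcal{F}_{n,k,\alpha}|\gg 2t$ some singleton $\Pi=\{S\}$ is missed with probability $>1/2$), but this is the same idea without the minimax wrapper.

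There is one concrete gap in your stage-two sketch. You write that the $\IS$-variants reduce to the $\bases$-cases ``by augmenting $v$ with a large term that forces any near-optimal solution to be a basis.'' This works for the two $\max$-$\IS$ variants, but fails for $(\min,\IS,\geq)$: adding a large constant $M$ to each $v(e)$ gives $v(S)=|S|\cdot M+v'(S)$, and \emph{minimizing} this drives $|S|$ toward zero, not toward the rank. So a near-optimal solution need not be a basis at all, and a non-basis independent set with $c(S)=T$ would fool your test. The paper sidesteps this by augmenting the \emph{weight} function as well: with $w(e)=H+c(e)$ and $L=kH+T$ (where $H>c(E)$), the constraint $w(S)\geq L$ itself forces $|S|=k$, after which the objective $v=w$ behaves exactly as in the $\bases$ case. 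More generally, the paper uses a single unified reduction (Definition~\ref{def:reduction}) that adds $H_I$ to both $v$ and $w$ and flips the sign of $c$ in $w$ via a parity bit $d(P)$, handling all seven cases at once; your case-split is workable for six of them but needs this weight-side augmentation for the seventh.
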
 

\begin{table}[tbh]
	\centering
	\begin{tabular}{m{2.7in} m{1.4in} m{1.9in}}
		\toprule
		Problem & Previous Results & This Paper \\
		\midrule
		Budgeted Matroid Independent Set & Efficient PTAS \cite{DKS23} & {\bf No Fully PTAS} \\
		Budgeted Matroid Intersection & Efficient PTAS  \cite{DKS23b} & {\bf No Fully PTAS}  \\
		Constrained Minimum Basis of Matroid & Efficient PTAS  \cite{HL04} & {\bf No Fully PTAS} \\
		Knapsack Cover with a Matroid & PTAS \cite{CCNR13} & {\bf No Fully PTAS}  \\
		\bottomrule
	\end{tabular}
	\caption{Implications of our results for previously studied MOL problems. All of our bounds hold for randomized algorithms. }
	\label{tab:results}
\end{table}

\Cref{thm:main} conclusively distinguishes MOL problems with arbitrary matroids, such as BM, from special cases with simpler matroid constraints. Furthermore, it shows that existing Efficient PTAS \cite{HL04,DKS23,DKS23b} for MOL problems on general matroids are the best possible. Notable implications of our results are given in \Cref{tab:results}, and consequences of our lower bounds for a set of previously studied problems~\cite{ CGM92,HL04,BBGS11,CVZ11,GZ10,CCNR13,DKS23,DKS23b,doron2023budgeted}
are given in \Cref{sec:results}. By resolving the complexity status of MOL problems on general matroids, our results promote future research to design (or rule out) Fully PTAS for MOL problems on restricted matroid classes (see \Cref{sec:discussion}). 

To prove \Cref{thm:main}, we turn our attention to the following problem. 
\begin{definition}
	\label{def:EMB}
	An instance of {\sc Exact Matroid Basis (EMB)}  is $I =  (E,\cI,c,T)$, where $(E,\cI)$ is a matroid, $c:E \rightarrow \mathbb{N}$ is a weight function, and $T \in \mathbb{N}$ is a target value. A {\em solution} is a basis $S$ of $(E,\cI)$ such that $c(S)=T$. The goal is to decide if there is a solution. 
\end{definition}
Similar to MOL problems, EMB does not specify the input. 
In an instance $(E,\cI,c,T)$ of {\sc oracle-EMB},
$E,c,T$ are explicitly given, and the independent sets $\cI$ are accessed via a membership oracle. An instance $I$ of a decision problem $\cD$
is a ``yes"-instance if the correct answer for $I$ is ``yes"; otherwise, $I$ is a ``no"-instance.
We say that $\cA$ is a {\em randomized algorithm} for a decision problem $\cD$ if,
 given a ``yes"-instance $I$ of $\cD$, $\cA$ returns ``yes"  with probability at least $\frac{1}{2}$; for a ``no"-instance, $\cA$ returns ``no" with probability $1$. The next result rules out a pseudo-polynomial time algorithm for oracle-EMB, thus distinguishing the problem from the special cases of $k$-{\sc subset sum} and EMB on linear matroids, which admit a pseudo-polynomial  time algorithm \cite{CGM92}. 

\begin{restatable}{theorem}{thmemb}
	\label{thm:emb}
	There is no randomized  algorithm  for {\sc oracle-EMB} that for any  \textnormal{oracle-EMB} instance $I =(E,\cI,c,T)$, where $n = |E|+1$ and  $m = c(E)+1$,  runs in time $ \left(n \cdot(T+2) \cdot m \right)^{O(1)}$. 
\end{restatable}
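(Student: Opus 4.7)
The plan is to establish Theorem~\ref{thm:emb} by an oracle lower bound obtained through Yao's minimax principle. I would first reduce the claim to ruling out a deterministic algorithm with $Q = (n(T+2)m)^{O(1)}$ queries on a carefully chosen distribution of \textsc{oracle-EMB} instances. The distribution will be supported on instances that share the same ground set $E$, the same weight function $c:E\to\mathbb{N}$, and the same target $T$, and differ only in the matroid structure, which is chosen according to a secret parameter $\Pi$. The goal is to make $|E|$, $c(E)$, and $T$ all polynomially bounded in a single parameter $N$, while the number of admissible values of $\Pi$ is super-polynomial in $N$, so that membership queries cannot localize $\Pi$ in time $Q$.

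Concretely, I would begin from a simple ``base'' matroid $\mathcal{M}_0$ on $E$ (for example, a partition matroid whose bases pick one element per group), together with a weight function $c$ engineered so that \emph{no} basis of $\mathcal{M}_0$ has weight exactly $T$, making the associated \textsc{EMB} instance a ``no''-instance. For each secret $\Pi$ in a large family, I would then define a perturbation $\mathcal{M}_\Pi$ that agrees with $\mathcal{M}_0$ on the membership status of \emph{most} subsets $S \subseteq E$, but in which exactly one ``planted'' basis $B_\Pi$ of weight $T$ has been inserted (possibly together with a few auxiliary sets whose status is flipped in a controlled way to keep the axioms intact). This is what the paper calls the $\Pi$-\emph{matroid family}.

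The main obstacle is precisely the step of verifying that $\mathcal{M}_\Pi$ is genuinely a matroid: inserting a single basis and modifying the independence status of sets near it can easily destroy the exchange property unless the surgery is performed with care. I would therefore first isolate the combinatorial conditions on $\Pi$ (e.g., the structure of $B_\Pi$ relative to the partition, and the weight profile of elements inside $B_\Pi$) that guarantee hereditariness and exchange, and then check both axioms by a case analysis based on how a given set $S$ intersects $B_\Pi$ and the auxiliary flipped sets. The number of ``admissible'' $\Pi$ should remain exponential in $N$ after these conditions are imposed, which is where the weight function must be designed to produce many candidate bases of exact weight $T$.

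Once the family is constructed, the lower bound itself is essentially an indistinguishability argument. For uniformly random $\Pi$, every fixed query $S$ has at most some small probability $p$ of returning a different answer in $\mathcal{M}_\Pi$ than in $\mathcal{M}_0$, because only queries that ``hit'' the planted structure can detect $\Pi$. A union bound gives total variation at most $Q\cdot p$ between the transcripts seen by a $Q$-query deterministic algorithm under $\mathcal{M}_0$ and under a random $\mathcal{M}_\Pi$; since the former is a ``no''-instance and every latter is a ``yes''-instance, correctness with probability $\ge 1/2$ forces $Q = \Omega(1/p)$. Calibrating $N$ so that $1/p$ dominates $(n(T+2)m)^{O(1)}$ yields the desired contradiction, and Yao's principle transfers the bound to randomized algorithms. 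The counting and indistinguishability steps are routine once the family is in hand; the technical heart lies in constructing and validating the $\Pi$-matroids themselves.
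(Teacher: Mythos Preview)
Your high-level strategy matches the paper's: build a family of oracle-EMB instances that share $(E,c,T)$ with $|E|,c(E),T$ all polynomial in a parameter $N$, differ only in a hidden matroid, and are information-theoretically indistinguishable to any algorithm making polynomially many queries. The indistinguishability step you describe (union bound over queries, or equivalently Yao) is essentially what the paper does, though the paper argues directly against the randomized algorithm rather than invoking Yao: it fixes the ``no'' instance, bounds by $2t$ the number of sets queried with probability $\geq 1/2$, and picks a planted set $S$ outside that collection.

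Where you diverge is in the construction of the $\Pi$-matroids, and this is exactly the place you flag as the obstacle. You propose starting from a partition matroid and performing local surgery to insert a planted basis $B_\Pi$, then verifying exchange by a case analysis on how sets meet $B_\Pi$. The paper's construction is both different and considerably simpler: take $E=[n]$ with the identity weight $c(i)=i$ and target $T=\alpha$; declare \emph{every} set of size $<k$ independent, \emph{every} $k$-set with $\textsf{sum}\neq\alpha$ a basis, and among the $k$-sets with $\textsf{sum}=\alpha$ only those in $\Pi$. This is a paving matroid, and the exchange axiom becomes a two-line check: if $|B|=k-1$ and $|A|=k$ with $B\not\subseteq A$, then $|A\setminus B|\geq 2$, and two distinct elements cannot both push $\textsf{sum}(B)$ to $\alpha$. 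No delicate surgery or auxiliary flips are needed. The counting is then a pigeonhole: among the $\binom{n}{k}$ sets of size $k$ there are only $O(n^2)$ possible sums, so some pair $(k,\alpha)$ has $2^{\Omega(n)}/n^3$ candidate secrets, while $n,T,m$ are all $O(n^2)$.

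So your plan is correct in outline, but the paper's paving-matroid trick dissolves precisely the step you identify as the technical heart; if you pursue the partition-matroid-plus-surgery route you will have to work harder for no gain.
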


\subsection{Technical Contribution}
We derive our results by introducing $\Pi$-{\em matroids}.
This new family of matroids carefully exploits a simple weight function to define a matroid that successfully {\em hides} a specific property $\Pi$ within its independent sets (see \Cref{sec:technical}). Using $\Pi$-matroids, we define 
	oracle-EMB instances whose solutions must satisfy the property $\Pi$. This shows the unconditional hardness of oracle-EMB, as $\Pi$ can be discovered only via an exponential number of queries to the membership oracle. Our hardness results for MOL problems (as stated in \Cref{thm:main}) are derived via reduction from oracle-EMB.

		Despite the abundance of lower bounds for matroid problems~
		\cite{lovasz1978matroid,JK82,karp1985complexity,soto2014simple}, as well as for knapsack  problems~\cite{chekuri2005polynomial,kulik2010there,cygan2019problems,bringmann2021fine}, we are not aware of lower bounds that leverage the interaction between the matroid constraint and the additional linear constraint required for deriving our new lower bound for EMB, and consequently for MOL problems. Indeed, if the matroid constraint is removed from \eqref{eq:0} (equivalently, $\cF(\cm)=2^E$), MOL problems become variants of classic 
	$0/1${\sc-knapsack}, which admits a Fully PTAS. Alternatively, if the linear constraint imposed by $w,L$ is removed, then we have the polynomially solvable maximum/minimum weight matroid independent set problem. This distinguishes our construction from existing lower bounds for matroid problems. 
	 $\Pi$-matroids may be useful for deriving lower bounds for other problems 
	(see \Cref{sec:discussion}).

		 Our unconditional lower bounds apply in the {\em oracle} model, where the independent sets of the given matroid can be accessed only via a membership oracle. One may question the validity of the bounds for variants of the problems where the matroid is encoded as part of the input. Indeed, in some scenarios, the use of oracles makes problems harder \cite{canetti2004random,chang1994random}. Thus, we complement our results by showing that the same lower bounds hold under the standard complexity assumption $\textnormal{P}\neq\textnormal{NP}$, even if the matroid is encoded as part of the instance and membership can be decided in polynomial time. We accomplish this by designing the family of {\em \textnormal{SAT}-matroids} $-$ a counterpart of the $\Pi$-matroid family whose members can be efficiently encoded. This construction can be used to obtain hardness results for other matroid problems in non-oracle models, based on existing analogous lower bounds in the oracle model (e.g., \cite{JK82}). We elaborate on that in \Cref{sec:SAT}.

		\subsection{Implications of Our Results and Prior Work}
\label{sec:results}		

Below we describe in further detail the implications of our results, and discuss previous work on MOL problems. In the following problems, general matroids are assumed to be accessed via membership oracles. 

\paragraph{Exact Matroid Basis (EMB):}
This is a generalization of the $k$-{\sc subset sum} problem (where $(E,\cI)$ is a uniform matroid).\footnote{In a uniform matroid, $\cI= \{S \subseteq E~|~ |S| \leq k\}$.} Thus, EMB is unlikely to be solvable in polynomial time. Instead, we seek a pseudo-polynomial time algorithm whose running time has polynomial dependence on the encoding size of the instance and the target value $T$. 
Indeed, the special case of EMB in which the matroid is {\em representable} (or, linear) admits such  a pseudo-polynomial time algorithm \cite{CGM92}. Since the 1990s, it has been an open question whether the result of Camerini et al.~\cite{CGM92} can be extended to general matroids. 
\Cref{thm:emb} resolves this question, ruling out the existence of  a pseudo-polynomial time algorithm for EMB.

\paragraph{Budgeted Matroid Independent Set (BM):} 

This problem is cast as $(\max,\IS,\leq)$-MOL. BM is a natural generalization of the classic $0/1${\sc-knapsack} problem, for which a Fully PTAS has been known since the 1970s~\cite{La79}. As mentioned above, a Fully PTAS is known also for other special cases of BM. A PTAS for BM was first given in \cite{BBGS11} as a special case of {\sc budgeted matroid intersection (BMI)}. In this generalization of BM, we are given {\em two} matroids $(E,\cI_1)$ and $(E,\cI_2)$, and a solution has to be an independent set of {\em both} matroids. A PTAS for BM also follows from 
the results of~\cite{GZ10,CVZ11} which present PTASs for multi-budgeted variants of BM. An Efficient PTAS for BM was recently given in \cite{DKS23} and for BMI in \cite{DKS23b}. The existence of a Fully PTAS for BM was posed as a central open question in~\cite{BBGS11,DKS23,DKS23b}. We answer this question negatively, as formalized in \Cref{thm:main}, giving a tight lower bound for BM and BMI. 

\comment{

\problem{{\sc Budgeted Matroid Intersection} (BMI)}{
	{\bf Instance} &  
	$I = (E,\cI_1, \cI_2,c,p,B)$, where $(E,\cI_1)$ and $(E,\cI_2)$ are  matroids,
	$c:E \rightarrow \mathbb{N}$ is a cost function, $p:E \rightarrow \mathbb{N}$ is a profit function and $B \in \mathbb{N}$ is the budget.
	\\
	{\bf Solution}& $S\in \cI_1\cap \cI_2$ such that $c(S) \leq B$.\\
	{\bf Objective} & Find a solution $S$ for $I$ of maximum profit $p(S)$. \\
	{\bf Model } &Both $\cI_1$ and $\cI_2$ are accessible only via membership oracles; the encoding size of the instance $I$ is the encoding size of $(E,c,p,B)$, the instance without $\cI_1$ and~$\cI_2$.}
}



\comment{
As  in \Cref{thm:bm},  \Cref{thm:emi} does not require any complexity related assumptions. In particular, \Cref{thm:emi} implies that there is no algorithm for EMI which runs in time $(T\cdot |I|)^{O(1)}$.

Camerini et al.~\cite{CGM92} studied EMI on {\em representable} matroids. A {\em representable matroid} over a field $\mathbb{F}$ is a matroid $(E,\cI)$, where $E\subseteq \mathbb{F}^{\ell}$ for some ${\ell}\in \mathbb{N}$, and $$\cI=\{S\subseteq E~|~\textnormal{$S$ is linearly independent in $\mathbb{F}^{\ell}$}\}.$$
While it is easy to show the above construction always leads to a matroid, not all matroids are representable (see, e.g.,~\cite{N16}).  
Assuming the elements of $\mathbb{F}$ can be encoded,
representable matroids can be easily encoded within a problem instance by simply encoding the vectors in $E$ as part of the input. 

 The variant of EMI studied by \cite{CGM92} considers representable matroids over the field of rational numbers. Specifically, the input for {\em representable} EMI is a representable matroid  $(E,\cI)$ over the rationals, a weight function $w:E\rightarrow \mathbb{N}$, and a target weight $T\in \mathbb{N}$. The objective is to determine if there exists an independent set  of the matroid of weight exactly $T$, that is, $S\in \cI$ such that $w(S)=T$. The result of \cite{CGM92} together with \cite{LMPS18} yield a randomized algorithm for representable EMI whose running time is polynomial
in the encoding of the instance and in $T$.\footnote{The result in \cite{CGM92} requires the solution to be a basis and is given for the more generic {\em exact parity basis} problem. The requirement that the solution is a basis can be easily circumvented using the truncation technique of \cite{LMPS18} for representable matroids.} \Cref{thm:emi} implies that this result cannot be generalized to arbitrary matroids. 

\Cref{thm:emi} can also be used to show that a parameterized matroid problem, studied by Fomin et al.~\cite{FGKSS23} on representable matroids, cannot be solved efficiently on general matroids. This additional application uses the fact that \Cref{thm:emi} also rules out algorithms
whose running times depend on a parameter $k$. 

}


\paragraph{Constrained Minimum Basis of a Matroid (CMB):} This problem can be cast as the matroid optimization problem $(\min,\bases,\leq)$-MOL. 
The {\sc constrained minimum spanning tree (CST)} problem is the  special case of CMB in which the matroid $(E,\cI)$ is graphical \cite{RG96, andersen1996bicriterion,HL04,hong2004fully}, namely, there is a graph $G=(V,E)$ such that  
the independent sets $\cI$ are cycle-free subsets of edges in $G$.
A PTAS for CST was given by Ravi and Goemans~\cite{RG96}. This result was improved to an Efficient PTAS by Hassin and Levin \cite{HL04}. A {\em bicriteria} FPTAS, which violates the budget constraint by a factor of $(1+\eps)$, was presented in~\cite{hong2004fully}. The authors of~\cite{HL04} mention that their result actually gives an Efficient PTAS for CMB. 
The existence of a Fully PTAS for CMB remained an open question. \Cref{thm:main} shows that the Efficient PTAS for CMB cannot be improved.
 
\paragraph{Knapsack Cover with a Matroid (KCM):}
As a final implication, \Cref{thm:main} rules out the existence of a Fully PTAS for a  {\em coverage} variant of $0/1${\sc-knapsack}, formulated as $(\min,\IS,\geq)$-MOL. 
In \cite{CCNR13}, Chakaravarthy et al. presented a PTAS for KCM using integrality properties of a linear programming formulation of KCM. Moreover, for the special case of KCM with a {\em partition matroid}, they give a Fully PTAS based on dynamic programming. The existence of  a Fully PTAS for KCM  on a general matroid was posed in~\cite{CCNR13} as an open question. \Cref{thm:main} answers this question negatively.
Our initial study indicates that an Efficient PTAS  for KCM can potentially be obtained by adapting the approach of Hassin and Levin \cite{HL04} to the setting of KCM. This suggests that our lower bound cannot be strengthened.

\paragraph{Organization:}
\label{sec:organization}

In \Cref{sec:technical} we introduce the $\Pi$-matroid family and give the proof of \Cref{thm:emb}. In \Cref{sec:tight_absolute} we prove \Cref{thm:main}, and in
 \Cref{sec:SAT} we show that similar lower bounds hold in the standard computational model. We conclude in \Cref{sec:discussion} with a summary and directions for future work.
		
\section{The Hardness of {\sc oracle exact matroid basis}}
\label{sec:technical}
\label{sec:Pi}
In this section, we prove \Cref{thm:emb}. We use in the proof the family of $\Pi$-{\em matroids}. A member in the $\Pi$-matroid family is given by four arguments: $n,k,\alpha \in \mathbb{N}_{>0}$, and $\Pi \subseteq 2^{[n]}$. For any~$m\in \mathbb{N}$, let $[m]=\{1,\ldots, m\}$.
 The first argument, $n \in \mathbb{N}_{>0}$, is the number of elements, and the ground set is $[n]$. The second argument, $k \in [n]$, is the rank of the matroid. 
 The third argument, $\alpha\in \mathbb{N}_{>0}$, is a target value, that is usually equal to $\textnormal{\textsf{sum}}(S)$ for some $S \subseteq [n]$, where $\textnormal{\textsf{sum}}(S) = \sum_{i \in S} i$. The last argument is a family of subsets $\Pi \subseteq 2^{[n]}$, where $2^{[n]} = 2^{\{1,\ldots,n\}}$.

The set $\Pi$ is called the {\em secret family} because finding $S\in \Pi$ is possible only via repeated queries to the membership oracle of the matroid. Since $\Pi$ can have an arbitrary structure, this may require exhaustive enumeration.

\begin{definition}
	\label{def:Matroid}
	Let $n,k,\alpha \in \mathbb{N}_{>0}$. For some $\Pi \subseteq 2^{[n]}$, define the {\bf $\Pi$-matroid} on $n,k$, and $\alpha$ as $M_{n,k,\alpha}(\Pi) = ([n],\cI_{n,k,\alpha}(\Pi))$, where $$\cI_{n,k,\alpha}(\Pi) = \mathcal{J}_{n,k} \cup \mathcal{K}_{n,k,\alpha} \cup \mathcal{L}_{n,k,\alpha}(\Pi)$$ and $\mathcal{J}_{n,k}, \mathcal{K}_{n,k,\alpha}, \mathcal{L}_{n,k,\alpha}(\Pi)$ are defined as follows. \begin{equation}
		\label{eq:1}
		\begin{aligned}
			\mathcal{J}_{n,k}~~~~~~ = {} & \big\{S \subseteq [n]~\big|~ |S| < k\big\} \\
			\mathcal{K}_{n,k,\alpha}~~~~ = {} & \big\{S \subseteq [n]~\big|~ |S| = k, \textnormal{\textsf{sum}}(S) \neq \alpha\big\} \\
			\mathcal{L}_{n,k,\alpha}(\Pi) = {} & \big\{S \subseteq [n]~\big|~ |S| = k, \textnormal{\textsf{sum}}(S) = \alpha, S \in \Pi\big\}. \\
		\end{aligned}
	\end{equation}
\end{definition} 

\begin{figure}
	\centering
	\vspace{-4cm} 
	\begin{tikzpicture}[scale=1.4, every node/.style={draw, circle, inner sep=1pt}]
		\node (11) at (5.5,4) {$\bf \textcolor{teal}{1}$};
		\node (22) at (7,4) {$\bf \textcolor{black}{2}$};
		\node (33) at (5.5,3) {$\bf \textcolor{red}{3}$};
		\node (44) at (7,3) {$\bf \textcolor{blue}{4}$};
		\draw (11) -- (22);
		\draw (11) -- (33);
		\draw (11) -- (44);
		\draw (22) -- (44);
		\draw (33) -- (44);
		
		\node[draw=none] at (9, 3.5) {$~\Pi = \{\textnormal{independent sets of } G\}$};
		
		\node[draw=none] at (4.5, 3.5) {A graph $G$};
		\node[draw=none] at (6.75, 2.5) {$\mathcal{J}_{4,2} = \{\emptyset, \{$\bf \textcolor{teal}{1}$\},\{$\bf \textcolor{black}{2}$\},\{$\bf \textcolor{red}{3}$\},\{$\bf \textcolor{blue}{4}$\}\}, ~~\mathcal{K}_{4,2,5} = \{\{$\bf \textcolor{black}{2}$,$\bf \textcolor{blue}{4}$\},\{$\bf \textcolor{black}{2}$,$\bf \textcolor{teal}{1}$\},\{$\bf \textcolor{blue}{4}$,$\bf \textcolor{red}{3}$\},\{$\bf \textcolor{red}{3}$,$\bf \textcolor{teal}{1}$\}\}, ~~\mathcal{L}_{4,2,5}(\Pi) = \{\{$\bf \textcolor{black}{2}$,$\bf \textcolor{red}{3}$\}\}$};
	\end{tikzpicture}
	\vspace{-7.5cm} 
	\caption{\label{fig:Pi} The independent sets of the $\Pi$-matroid $M_{n,k,\alpha}(\Pi)$, with parameters $n = 4$, $k = 2$, and $\alpha = 5$. The secret family $\Pi$ contains all independent sets in the graph $G$, where $\{2,3\}$ is the only independent set in $G$ with $k$ elements.}
	
\end{figure}

In words, $\mathcal{J}_{n,k}$ contains all subsets of strictly less than $k$ elements;  $\mathcal{K}_{n,k,\alpha}$ contains all subsets of cardinality~$k$ whose total sum is not~$\alpha$; finally, 	$\mathcal{L}_{n,k,\alpha}(\Pi)$ contains all subsets of cardinality~$k$ and total sum~$\alpha$ which also belong to $\Pi$. See \Cref{fig:Pi} for an example of a member of the $\Pi$-matroid family. Using a simple argument, we show that the set system in 
\Cref{def:Matroid} is indeed a matroid.
For the sets 
$\cI_{n,k,\alpha}(\Pi)$, $ \mathcal{J}_{n,k}$, $\mathcal{K}_{n,k,\alpha}$ and $\mathcal{L}_{n,k,\alpha}(\Pi)$ defined in \Cref{def:Matroid},
we often omit the subscripts ${n,k,\alpha}$ and ${n,k}$ when the values of $n,k,\alpha$ are known by context. For simplicity, for any set $A$ and an element $a$, let $A+a$, $A-a$ be $A \cup \{a\}$ and $A \setminus \{a\}$, respectively.

\begin{restatable}{lemma}{isamatroid}
	\label{lem:1}
	For every  $n,k,\alpha \in \mathbb{N}_{>0}$, and $\Pi \subseteq 2^{[n]}$ it holds that $\mm$ is a matroid.
\end{restatable}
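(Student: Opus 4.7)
The plan is to verify the three matroid axioms for $\cI_{n,k,\alpha}(\Pi)$ directly, leveraging the crucial fact that the elements of the ground set $[n]$ are distinct natural numbers, so the function $\textsf{sum}$ separates single-element extensions by distinct amounts.

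First I would handle axiom (i) and the hereditary property (ii). Since $k\geq 1$, we have $|\emptyset|=0<k$, hence $\emptyset\in\cJ\subseteq \cI$. For hereditariness, let $A\in \cI$ and $B\subseteq A$; since $|A|\leq k$, either $|B|<k$, in which case $B\in \cJ$, or $|B|=k$, which forces $B=A\in\cI$. Both steps are routine.

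The core of the argument is the exchange axiom (iii). Let $A,B\in \cI$ with $|A|>|B|$. I would split into two cases by the size of $B$. If $|B|<k-1$, then any $e\in A\setminus B$ (which is nonempty) yields $|B+e|<k$, so $B+e\in\cJ\subseteq \cI$. The only delicate case is $|B|=k-1$, which forces $|A|=k$ and $|A\setminus B|\geq 1$. If $|A\setminus B|=1$, write $A\setminus B=\{e\}$; then $B+e=A\in \cI$ and we are done. Otherwise $|A\setminus B|\geq 2$, and I would exploit that distinct $e,e'\in A\setminus B$ are distinct positive integers, so $\textnormal{\textsf{sum}}(B+e)=\textnormal{\textsf{sum}}(B)+e$ takes distinct values for distinct $e\in A\setminus B$. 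In particular at most one such $e$ can satisfy $\textnormal{\textsf{sum}}(B+e)=\alpha$, so there exists $e\in A\setminus B$ with $\textnormal{\textsf{sum}}(B+e)\neq \alpha$; this gives $B+e\in \mathcal{K}\subseteq \cI$.

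The main obstacle, if any, is the subcase $|B|=k-1$, $|A|=k$, $|A\setminus B|\geq 2$, because neither $\cJ$ nor $\cL(\Pi)$ can be relied upon; we must land in $\mathcal{K}$. The key insight that makes this clean is that the weight function used in the definition is simply $i\mapsto i$, so distinct elements produce distinct single-element weight increments; hence the ``bad'' target value $\alpha$ is hit by at most one candidate extension, and the presence of at least two candidates guarantees a valid choice. Apart from this observation, the proof is essentially a bookkeeping verification across the three building blocks $\cJ$, $\mathcal{K}$, $\mathcal{L}(\Pi)$.
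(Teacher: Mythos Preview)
Your proposal is correct and follows essentially the same approach as the paper: both verify the axioms directly, with the only nontrivial case being $|B|=k-1$, $|A|=k$, handled by noting that if there are at least two elements in $A\setminus B$ then their distinct integer values yield distinct sums, so some extension lands in $\mathcal{K}$. Your split into $|A\setminus B|=1$ versus $|A\setminus B|\ge 2$ is equivalent to the paper's split into $B\subseteq A$ versus $B\not\subseteq A$.
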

	\begin{proof}
	We first note that $\emptyset \in \mathcal{J}$ since $0<k$; therefore, $\emptyset \in \cI(\Pi)$. For the hereditary property, let $A \in \cI(\Pi)$. For all $B \subset A$ it holds that $|B| < k$; thus, $B \in \mathcal{J}$ and it follows that $B \in \cI(\Pi)$. For the exchange property, let $A,B \in \cI(\Pi)$ such that $|A| > |B|$. We consider the following cases. 
	\begin{enumerate}
		\item $|B| < k-1$. Then, for all $e \in A \setminus B$ it holds that $|B+e| < k-1+1 = k$; thus, $B+e \in \cJ$ and it follows that $B+e \in \cI(\Pi)$. Note that there is such $e  \in A \setminus B$ because $|A| > |B|$. 
		
		\item $|B| = k-1$ and $|A| = k$. We consider two subcases.  
		\begin{enumerate}
			\item $B \subseteq A$. Then, as $|A| > |B|$ there is $e \in A \setminus B$; thus, $B+e = A$ (because $|B| = k-1$ and $|A| = k$). As $A \in \cI(\Pi)$, it follows that $B+e \in \cI(\Pi)$.
			
			\item $B \not\subseteq A$.  Then, as $|B| = k-1$ and $|A| = k$ it follows that $|B \cap A| < |B| = k-1$; thus, $|A \setminus B| = |A|-|A \cap B| > k - (k-1) = 1$. Therefore, there are $e,f \in A \setminus B$ such that $e \neq f$. 
			It follows that there is $g \in \{e,f\}$ such that $\textnormal{\textsf{sum}}(B+g) = \textnormal{\textsf{sum}}(B)+g \neq \alpha$. We conclude from \eqref{eq:1} that $B+g \in \mathcal{K}$,
			implying that $B+g \in \cI(\Pi)$.
		\end{enumerate}
\end{enumerate}	\end{proof}
Observe that $\mathcal{K}_{n,k,\alpha} \cup \mathcal{L}_{n,k,\alpha}(\Pi)$ is the set of bases of the matroid. Moreover, for any arguments $n$, $k$, $\alpha$ and $\Pi$,
the cardinality of every  dependent set $S\in 2^{[n]}\setminus \cI_{n,k,\alpha}(\Pi)$  is at least the rank of the $\Pi$-matroid $M_{n,k,\alpha} (\Pi)=([n],\cI_{n,k,\alpha}(\Pi))$.
Such matroids are known as {\em paving matroids} (see, e.g.,~ \cite{Ox2006,PvdP14}).
Using $\Pi$-matroids, we define the following collection of oracle-EMB instances. In these instances, the matroid is a $\Pi$-matroid, 
where $\Pi$ is some unknown fixed family of subsets of the ground set. 
\begin{definition}
	\label{def:EMBpi}
	For every $n,k,\alpha \in \mathbb{N}_{>0}$, $\Pi \subseteq 2^{[n]}$ define the 
	{\em induced} oracle-\textnormal{EMB} instance of $n,k,\alpha,\Pi$, denoted $I_{n,k,\alpha}(\Pi)$, as follows. Let
	$\textnormal{\textsf{id}}_n: [n] \rightarrow [n]$, where $\textnormal{\textsf{id}}_n(i) = i~\forall i \in [n]$. Then, 
	$I_{n,k,\alpha}(\Pi) = ([n],\cI_{n,k,\alpha}(\Pi),\textnormal{\textsf{id}}_n,\alpha)$.
\end{definition}

Observe that the above is indeed an oracle-EMB instance if the independent sets of the given matroid are accessible via a membership oracle. The following is an easy consequence of \Cref{def:EMBpi}.

\begin{observation}
	\label{obs:YesNo}
	For every $n,k,\alpha \in \mathbb{N}_{>0}$ and $\Pi \subseteq 2^{[n]}$, it holds that $I_{n,k,\alpha}(\Pi)$ is an \textnormal{oracle-EMB} ''yes"-instance if and only if there is $S \in \Pi$ such that $|S| = k$ and $\textnormal{\textsf{sum}}(S) = \textnormal{\textsf{id}}_n(S) = \alpha$. 
\end{observation}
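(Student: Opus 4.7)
The observation is essentially a definition-unwrapping exercise, so my plan is to make the chain of equivalences explicit by first characterizing the bases of $M_{n,k,\alpha}(\Pi)$ and then imposing the weight-target condition from the induced EMB instance.

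First I would identify the bases of $M_{n,k,\alpha}(\Pi)$. By construction in \eqref{eq:1}, every set in $\mathcal{J}_{n,k}$ has cardinality strictly less than $k$, while every set in $\mathcal{K}_{n,k,\alpha}\cup \mathcal{L}_{n,k,\alpha}(\Pi)$ has cardinality exactly $k$. Therefore the rank of the matroid is $k$ whenever $k\leq n$ (and the edge case $k>n$ is vacuous, since then no subset of $[n]$ has size $k$, making both sides of the stated equivalence false), and the bases of $M_{n,k,\alpha}(\Pi)$ are precisely the members of $\mathcal{K}_{n,k,\alpha}\cup \mathcal{L}_{n,k,\alpha}(\Pi)$.

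Next I would plug this into \Cref{def:EMB} applied to $I_{n,k,\alpha}(\Pi)=([n],\cI_{n,k,\alpha}(\Pi),\textnormal{\textsf{id}}_n,\alpha)$: the instance is a ``yes''-instance if and only if there exists a basis $S$ with $\textnormal{\textsf{id}}_n(S)=\alpha$, i.e.\ $\textnormal{\textsf{sum}}(S)=\alpha$. By the preceding step, such an $S$ must lie in $\mathcal{K}_{n,k,\alpha}\cup \mathcal{L}_{n,k,\alpha}(\Pi)$; but the defining condition of $\mathcal{K}_{n,k,\alpha}$ in \eqref{eq:1} forbids $\textnormal{\textsf{sum}}(S)=\alpha$, so $S$ must actually lie in $\mathcal{L}_{n,k,\alpha}(\Pi)$. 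Reading off the definition of $\mathcal{L}_{n,k,\alpha}(\Pi)$ in \eqref{eq:1}, membership is equivalent to the conjunction $|S|=k$, $\textnormal{\textsf{sum}}(S)=\alpha$, and $S\in \Pi$, which is exactly the condition claimed in the observation. The converse direction is immediate from the same line: any $S\in \Pi$ with $|S|=k$ and $\textnormal{\textsf{sum}}(S)=\alpha$ belongs to $\mathcal{L}_{n,k,\alpha}(\Pi)\subseteq \cI_{n,k,\alpha}(\Pi)$, hence is a basis with $\textnormal{\textsf{id}}_n(S)=\alpha$, i.e., a solution.

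There is no real technical obstacle here; the only point requiring a sentence of care is the verification that the rank is $k$, so that ``basis'' coincides with ``size-$k$ independent set''. Once that is noted, the biconditional follows from a single pass through the three families in \eqref{eq:1}.
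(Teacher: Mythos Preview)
Your proposal is correct and matches the paper's treatment: the paper states the observation without proof, as an immediate consequence of \Cref{def:EMBpi}, and your definition-unwrapping is exactly the intended one-line argument. One small caveat worth tightening: the claim that the rank equals $k$ whenever $k\le n$ fails in the degenerate case $k=n$, $\alpha=\tfrac{n(n+1)}{2}$, $[n]\notin\Pi$, where $\mathcal{K}_{n,k,\alpha}=\mathcal{L}_{n,k,\alpha}(\Pi)=\emptyset$ and the rank drops to $n-1$; however, no size-$(n-1)$ subset of $[n]$ can have sum $\alpha$, so both sides of the biconditional remain false and the observation still holds.
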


By \Cref{obs:YesNo}, an algorithm that finds an independent set of $\mathcal{M}_{n,k,\alpha}(\Pi)$ satisfying $|S| = k$ and $\textnormal{\textsf{sum}}(S) = \alpha$, 
in fact outputs a subset $S \in \Pi$. 
As the input for an induced oracle-EMB instance $I_{n,k,\alpha}(\Pi)$ does not contain an explicit encoding of $\Pi$, finding $S\in \Pi$ requires a sequence of queries to the membership oracle of $\mm$.  Roughly speaking, to decide $I_{n,k,\alpha}(\Pi)$, 
an algorithm for oracle-EMB must iterate over all (exponentially many) subsets $S \subseteq [n]$ such that $|S| = k$ and $\textnormal{\textsf{sum}}(S) = \alpha$. This is the intuition
behind the proof of the next result.

		\begin{figure}	
	\begin{center}
		\begin{tikzpicture}[ultra thick,scale=1.1, every node/.style={scale=1}]

		\node at (-1.5,3.5) {$\bf \textcolor{blue}{I}$};
		\draw (0,3) rectangle (6,4);
		\draw (1,3)--(1,4);
		\draw (2,3)--(2,4);
		\draw (3,3)--(3,4);
		
		\draw (5,3)--(5,4);
		
		\draw (6,3)--(6,4);								
		\node at (0.5,2.5) {$1$};
		\node at (0.5,3.5) {no};
		\node at (1.5,2.5) {$2$};
		\node at (1.5,3.5) {yes};
		\node at (2.5,2.5) {$3$};
		\node at (2.5,3.5) {no};
		\node at (4,3.5) {$\ldots$};
		\node at (5.5,2.5) {$|Q(b)|$};
		\node at (5.5,3.5) {no};
		
		\node at (8.5,3.5) {$\bf \textcolor{black}{S \notin Q(b)},~~ \textcolor{blue}{S \notin \cI(\Pi_{\emptyset})}$};
		
		
		\node at (-1.5,0.5) {$\bf \textcolor{red}{I_S}$};
		\draw (0,0) rectangle (6,1);
		\draw (1,0)--(1,1);
		\draw (2,0)--(2,1);
		\draw (3,0)--(3,1);
		
		\draw (5,0)--(5,1);
		
		
		\draw (6,0)--(6,1);								
		\node at (0.5,-0.5) {$1$};
		\node at (0.5,0.5) {no};
		\node at (1.5,-0.5) {$2$};
		\node at (1.5,0.5) {yes};
		\node at (2.5,-0.5) {$3$};
		\node at (2.5,0.5) {no};
		\node at (4,0.5) {$\ldots$};
		\node at (5.5,-0.5) {$|Q(b)|$};
		\node at (5.5,0.5) {no};
		
		\node at (8.5,0.5) {$\bf \textcolor{black}{S \notin Q(b)},~~ \textcolor{red}{ S \in \cI(\Pi_{S})}$};

		
		
		\end{tikzpicture}
	\end{center}
	\caption{\label{fig:proof} An illustration of the proof of \Cref{thm:emb}. The figure presents the sequences of queries to the membership oracles by the algorithm on the instances $I$ and $I_S$ for a string of bits $b$, such that $S \notin Q(b)$. The label ''yes" (''no") indicates that the queried set is (not) independent in the matroid. 
		The only query that distinguishes between $I$ and $I_S$ is on the set $S$, which is not queried; thus, the algorithm returns the same output for $I$ and $I_S$.}
\end{figure}

\thmemb*
	\begin{proof}
		Assume towards contradiction that there exist a constant $d \in \mathbb{N}$ and a randomized algorithm $\cA$ that decides every oracle-EMB instance $(E,\cI,c,T)$ in time $\left( (T+2) \cdot n \cdot m \right)^{d}$, where $n = |E|+1$ and $m = c(E)+1$.  
		For every $n,k,\alpha \in \mathbb{N}_{>0}$, consider the set of all subsets of $[n]$ with cardinality $k$ and sum $\alpha$: $$\mathcal{F}_{n,k,\alpha} = \big\{S \subseteq [n]~\big|~|S| = k, \textnormal{\textsf{sum}}(S) = \alpha\big\}.$$ 
		To reach a contradiction, we construct an induced oracle-EMB instance on which $\cA$ does not compute the proper output with sufficiently high probability. The parameters of the instance are extracted from the following combinatorial claim. 
			\begin{claim}
			\label{claim:H1}
			There are $\tilde{n} \in \mathbb{N}_{>0}$, $\tilde{k} \in \left[\tilde{n}\right]$, and $\tilde{\alpha} \in \left[\tilde{n}^2\right]$ such that 
$|\mathcal{F}_{\tilde{n},\tilde{k},\tilde{\alpha}}|> 	2 \cdot \left(	12 \cdot \tilde{n}^5 \right)^{d} $. 
		\end{claim}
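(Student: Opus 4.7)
The plan is to prove the claim by a straightforward pigeonhole-plus-asymptotic-growth argument, choosing $\tilde{k}$ near $\tilde{n}/2$ so that $\binom{\tilde{n}}{\tilde{k}}$ is exponentially large in $\tilde{n}$, and then noting that the polynomial bound $2(12\tilde{n}^5)^d$ is eventually beaten.

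First, I would bound the range of possible sums. For any $S \subseteq [n]$ with $|S| = k$, we have $\textnormal{\textsf{sum}}(S) \leq \sum_{i=1}^n i = n(n+1)/2 \leq n^2$, so every such $S$ contributes to $\mathcal{F}_{n,k,\alpha}$ for some $\alpha \in [n^2]$. Summing over $\alpha$ gives $\sum_{\alpha=1}^{n^2} |\mathcal{F}_{n,k,\alpha}| = \binom{n}{k}$, so by pigeonhole there exists $\alpha^* \in [n^2]$ with
\[
|\mathcal{F}_{n,k,\alpha^*}| \geq \frac{1}{n^2} \binom{n}{k}.
\]

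Next, I would set $\tilde{k} = \lfloor \tilde{n}/2 \rfloor$, for which a standard lower bound (e.g., from $\sum_k \binom{\tilde{n}}{k} = 2^{\tilde{n}}$ and the fact that the central term is maximal) gives $\binom{\tilde{n}}{\tilde{k}} \geq 2^{\tilde{n}} / (\tilde{n}+1)$. Combining with the pigeonhole step yields an $\tilde{\alpha} \in [\tilde{n}^2]$ such that
\[
|\mathcal{F}_{\tilde{n},\tilde{k},\tilde{\alpha}}| \geq \frac{2^{\tilde{n}}}{\tilde{n}^2 (\tilde{n}+1)}.
\]
It then suffices to choose $\tilde{n}$ large enough (depending only on the constant $d$) so that
\[
\frac{2^{\tilde{n}}}{\tilde{n}^2 (\tilde{n}+1)} > 2 \cdot \bigl(12 \tilde{n}^5\bigr)^d,
\]
or equivalently $2^{\tilde{n}} > 2 \cdot 12^d \cdot \tilde{n}^{5d+3}$. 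Since the left-hand side grows exponentially in $\tilde{n}$ while the right-hand side is polynomial in $\tilde{n}$ (with exponent and constant depending only on the fixed $d$), this inequality holds for all sufficiently large $\tilde{n}$, completing the proof.

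There is no real obstacle here; the only subtlety is making sure the range $[\tilde{n}^2]$ of allowed $\alpha$ indeed captures every possible sum, which is why I use the trivial upper bound $\textnormal{\textsf{sum}}(S) \leq n(n+1)/2 \leq n^2$. Everything else is a routine pigeonhole followed by the observation that exponentials dominate polynomials.
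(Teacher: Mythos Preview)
Your proof is correct and follows essentially the same approach as the paper: bound the range of possible sums by $\tilde{n}^2$, apply pigeonhole over $\alpha$ to get $|\mathcal{F}_{\tilde{n},\tilde{k},\tilde{\alpha}}|\ge \binom{\tilde{n}}{\tilde{k}}/\tilde{n}^2$, and then use that an exponential lower bound on the binomial coefficient eventually beats the polynomial $2(12\tilde{n}^5)^d$. The only cosmetic difference is that the paper also selects $\tilde{k}$ by pigeonhole (from $\sum_{k=1}^{\tilde{n}}\binom{\tilde{n}}{k}=2^{\tilde{n}}-1$) rather than fixing $\tilde{k}=\lfloor \tilde{n}/2\rfloor$ directly, but this leads to the same $2^{\tilde{n}}/\mathrm{poly}(\tilde{n})$ bound and the argument is otherwise identical.
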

		\begin{claimproof}
				Since $d$ is a constant, 
				there is $\tilde{n} \in \mathbb{N}_{>0}$ such that 
				\begin{equation}
					\label{eq:nTild}
						\left(	12 \cdot \tilde{n}^5 \right)^{d} < \frac{2^{\tilde{n}}-1}{2 \cdot \tilde{n}^3}.
				\end{equation}
				  Fix $\tilde{n} \in \mathbb{N}_{>0}$ satisfying \eqref{eq:nTild}. Recall that $\sum_{k \in \{0,1,\ldots,\tilde{n}\}} {\tilde{n} \choose k} = 2^{\tilde{n}}$ from a basic property of the Pascal triangle; therefore, $\sum_{k \in \{1,\ldots,\tilde{n}\}} {\tilde{n} \choose k} = 2^{\tilde{n}}-1$. Thus, there is  $\tilde{k} \in \{1,\ldots,\tilde{n}\}$, 
				  such that 
				  \begin{equation}
				  	\label{eq:kTild}
				  	{\tilde{n} \choose \tilde{k}} \geq \frac{2^{\tilde{n}}-1}{\tilde{n}}.
				  \end{equation}
				   Fix $\tilde{k} \in \left[\tilde{n}\right]$ satisfying \eqref{eq:kTild}. Observe that for each $S \in 2^{[\tilde{n}]}$ satisfying $|S| = \tilde{k}>0$, it holds that $1 \leq \textnormal{\textsf{sum}}(S) \leq |S| \cdot \max_{i \in [\tilde{n}]} i = \tilde{n}^2$. Thus, there are $\tilde{n}^2$ possibilities for $\alpha \in \left[\tilde{n}^2\right]$ satisfying $\alpha = \textnormal{\textsf{sum}}(S)$, for some $S \in 2^{[\tilde{n}]}$ such that $|S| = \tilde{k}$. Moreover, there are ${\tilde{n} \choose \tilde{k}}$ subsets of $[\tilde{n}]$ of cardinality $\tilde{k}$. By the pigeonhole principle, there is $\tilde{\alpha} \in  \left[  \tilde{n}^2 \right]$ such that $|\mathcal{F}_{\tilde{n},\tilde{k},\tilde{\alpha}}| \geq  \frac{{\tilde{n} \choose \tilde{k}}}{\tilde{n}^2}$. Thus,
				   $$|\mathcal{F}_{\tilde{n},\tilde{k},\tilde{\alpha}}| \geq \frac{{\tilde{n} \choose \tilde{k}}}{\tilde{n}^2}  \geq  \frac{2^{\tilde{n}}-1}{\tilde{n} \cdot \tilde{n}^2} > 2 \cdot \left(	12 \cdot \tilde{n}^5 \right)^{d}.$$ The second inequality follows from \eqref{eq:kTild}, and the third inequality holds by \eqref{eq:nTild}. 
		\end{claimproof}

		Let $\tilde{n} \in \mathbb{N}_{>0}$, $\tilde{k} \in \left[\tilde{n}\right]$, and $\tilde{\alpha} \in \left[\tilde{n}^2\right]$  satisfying the conditions of \Cref{claim:H1}. Define $t$ to be the maximum running time of $\cA$ on an induced EMB instance $I_{\tilde{n},\tilde{k},\tilde{\alpha}}(\Pi)$ over all $\Pi \in 2^{\left[ \tilde{n}\right]}$. 
			\begin{claim}
			\label{claim:H2}
			$t \leq	\left(	12 \cdot \tilde{n}^5 \right)^{d}$.    
		\end{claim}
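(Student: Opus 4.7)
The plan is a direct parameter substitution followed by elementary bounds; there is no real conceptual difficulty, and the only care needed is making sure the bounds are uniform in $\Pi$. I would open by unpacking what an induced instance $I_{\tilde{n},\tilde{k},\tilde{\alpha}}(\Pi)$ looks like at the level of the quantities that appear in the assumed running time of $\cA$. By \Cref{def:EMBpi}, the ground set is $[\tilde{n}]$, the weight function is $\textnormal{\textsf{id}}_n$, and the target value is $\tilde{\alpha}$. Consequently, in the notation of \Cref{thm:emb}, we have $|E|+1 = \tilde{n}+1$, $c(E)+1 = \tilde{n}(\tilde{n}+1)/2 + 1$, and $T+2 = \tilde{\alpha}+2$. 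Crucially, all three quantities depend only on $\tilde{n}$ and $\tilde{\alpha}$ and are therefore \emph{independent} of $\Pi$.

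Next I would invoke the contradiction hypothesis: $\cA$ runs in time at most $\bigl((T+2)\cdot(|E|+1)\cdot(c(E)+1)\bigr)^d$ on every oracle-EMB instance, a bound that treats each oracle query as a unit-cost operation. Applying this to the induced instance yields, for every $\Pi \subseteq 2^{[\tilde{n}]}$, the uniform bound
\[
t_\Pi \;\leq\; \bigl((\tilde{\alpha}+2)\cdot(\tilde{n}+1)\cdot(\tilde{n}(\tilde{n}+1)/2+1)\bigr)^d,
\]
so the same bound upper-bounds $t = \max_\Pi t_\Pi$.

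Now I would use the \Cref{claim:H1} bound $\tilde{\alpha} \leq \tilde{n}^2$ together with three elementary inequalities valid for every $\tilde{n} \geq 1$: $\tilde{\alpha}+2 \leq \tilde{n}^2+2 \leq 3\tilde{n}^2$; $\tilde{n}+1 \leq 2\tilde{n}$; and $\tilde{n}(\tilde{n}+1)/2 + 1 \leq \tilde{n}^2+1 \leq 2\tilde{n}^2$. Multiplying these yields
\[
(\tilde{\alpha}+2)\cdot(\tilde{n}+1)\cdot(\tilde{n}(\tilde{n}+1)/2+1) \;\leq\; 3\tilde{n}^2\cdot 2\tilde{n}\cdot 2\tilde{n}^2 \;=\; 12\tilde{n}^5,
\]
so raising both sides to the power $d$ gives the desired estimate $t \leq (12\tilde{n}^5)^d$.

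The hard part will not be in this claim at all; it is purely a bookkeeping step whose sole role is to express $\cA$'s runtime budget on the hard family in a form that can be compared against the combinatorial count $|\mathcal{F}_{\tilde{n},\tilde{k},\tilde{\alpha}}|$ delivered by \Cref{claim:H1}. The real work comes afterwards, when one argues that since $\cA$ makes at most $t \leq (12\tilde{n}^5)^d$ queries, it cannot, on a single random bit string, distinguish enough subsets in $\mathcal{F}_{\tilde{n},\tilde{k},\tilde{\alpha}}$ to separate ``yes'' from ``no'' instances with probability $\geq 1/2$, in view of the gap $|\mathcal{F}_{\tilde{n},\tilde{k},\tilde{\alpha}}| > 2(12\tilde{n}^5)^d$.
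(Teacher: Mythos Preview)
Your proposal is correct and follows essentially the same approach as the paper: both substitute the parameters of the induced instance into the assumed running-time bound $\bigl((T+2)\cdot n\cdot m\bigr)^d$, observe that these parameters do not depend on $\Pi$, and then apply the same elementary inequalities $\tilde{\alpha}+2\leq 3\tilde{n}^2$, $\tilde{n}+1\leq 2\tilde{n}$, and $\tilde{n}(\tilde{n}+1)/2+1\leq 2\tilde{n}^2$ to obtain $12\tilde{n}^5$. Your added commentary on the claim's role in the overall argument is accurate.
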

		\begin{claimproof}
				Let $T = \tilde{\alpha}$, $n = \tilde{n}+1$, and $m = \textsf{id}_{\tilde{n}}(\left[\tilde{n}\right])+1$. By the running time guarantee of $\cA$, it follows that $t \leq	\left(	n \cdot (T+2) \cdot m \right)^{d}$. It remains to bound $n \cdot (T+2) \cdot m$. Since $\textsf{id}_{\tilde{n}}(\left[\tilde{n}\right]) = \textsf{sum}(\left[\tilde{n}\right])$ and $\tilde{\alpha} \leq \tilde{n}^2$, 
				 \begin{equation*}
				\label{eq:InducedParameters}
				n  \cdot (T+2) \cdot m \leq	 \left(\tilde{n}+1\right) \cdot \left( \tilde{n}^2+2\right)\left( \textsf{sum}( \left[\tilde{n} \right])+1 \right) \leq 2 \tilde{n} \cdot 3 \tilde{n}^2 \cdot \left(\frac{\tilde{n} \cdot (\tilde{n}+1)}{2}+1 \right) \leq 12 \cdot \tilde{n}^5. 
			\end{equation*}  The second inequality follows from the sum of the terms of an arithmetic sequence. By the above and the running time guarantee of $\cA$, it follows that $t \leq	\left(	n \cdot (T+2) \cdot m \right)^{d} \leq 	\left(	12 \cdot \tilde{n}^5 \right)^{d}$.   
		\end{claimproof}
		
	Given  an induced oracle-EMB instance $I_{\tilde{n},\tilde{k},\tilde{\alpha}}(\Pi)$, for some $\Pi \subseteq \left[ \tilde{n}\right]$,  the randomized algorithm $\cA$ produces a random string of bits $\bar{b} \in \{0,1\}^{t}$ and performs a sequence of queries to the membership oracle of $M_{\tilde{n},\tilde{k},\tilde{\alpha}}$, based on $\bar{b},\tilde{n},\tilde{k},\tilde{\alpha}$, and the results of the previous queries; then, the algorithm decides the given instance based on the queries.

		 Let $\Pi_{\emptyset} = \emptyset$, and consider the induced oracle-EMB instance $I = I_{\tilde{n},\tilde{k},\tilde{\alpha}}(\Pi_{\emptyset})$. Given a string of bits $b \in \{0,1\}^{t}$, let $Q(b) \subseteq 2^{[\tilde{n}]}$ be the set of  all subsets $S \subseteq [\tilde{n}]$ queried by $\cA$ on the instance $I$ and the bit-string $b$ (on the membership oracle of $M_{\tilde{n},\tilde{k},\tilde{\alpha}}(\Pi_{\emptyset})$). For clarity, we use $\bar{b}$ for a random string, and $b$ for a realization of $\bar{b}$ to a specific string. Note that $Q(b)$ is a set, since the algorithm is deterministic for every $b \in \{0,1\}^{t}$; conversely, $Q(\bar{b})$ is a random set for a random string $\bar{b} \in \{0,1\}^{t}$. As the running time of $\cA$ on 
		$I$ is bounded by $t$, it holds that $|Q(b)| \leq t$ for every $b \in \{0,1\}^{t}$. Let 
		\begin{equation*}
			\label{eq:R}
			R(I) = \left\{S \in \mathcal{F}_{\tilde{n},\tilde{k},\tilde{\alpha}}~\bigg|~\Pr \left(S \in Q(\bar{b}) \right) \geq \frac{1}{2} \right\}
		\end{equation*} be all sets in $ \mathcal{F}_{\tilde{n},\tilde{k},\tilde{\alpha}}$ that are queried by $\cA$ with probability  at least $\frac{1}{2}$. 
			\begin{claim}
			\label{claim:H3}
			$|R(I)|  < |\mathcal{F}_{\tilde{n},\tilde{k},\tilde{\alpha}}|$. 
		\end{claim}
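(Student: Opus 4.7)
The plan is to show $|R(I)|$ is small by an averaging/linearity-of-expectation argument, using that the deterministic execution triggered by each realization of the random bit-string can touch at most $t$ subsets of $[\tilde{n}]$.

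First I would observe that for any fixed bit-string $b \in \{0,1\}^t$, the execution of $\mathcal{A}$ on $I$ with randomness $b$ is deterministic and runs in time at most $t$, so $|Q(b)| \leq t$; in particular $|Q(b) \cap \mathcal{F}_{\tilde{n},\tilde{k},\tilde{\alpha}}| \leq t$. Hence, for $\bar{b}$ drawn uniformly from $\{0,1\}^t$, linearity of expectation gives
\[
\sum_{S \in \mathcal{F}_{\tilde{n},\tilde{k},\tilde{\alpha}}} \Pr\!\left(S \in Q(\bar{b})\right) \;=\; \mathbb{E}\!\left[\,|Q(\bar{b}) \cap \mathcal{F}_{\tilde{n},\tilde{k},\tilde{\alpha}}|\,\right] \;\leq\; t.
\]

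Next I would apply a Markov-style argument restricted to $R(I)$: by definition every $S \in R(I)$ satisfies $\Pr(S \in Q(\bar{b})) \geq \tfrac{1}{2}$, so
\[
\frac{|R(I)|}{2} \;\leq\; \sum_{S \in R(I)} \Pr\!\left(S \in Q(\bar{b})\right) \;\leq\; \sum_{S \in \mathcal{F}_{\tilde{n},\tilde{k},\tilde{\alpha}}} \Pr\!\left(S \in Q(\bar{b})\right) \;\leq\; t,
\]
which yields $|R(I)| \leq 2t$.

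Finally I would combine this with the two earlier claims: by \Cref{claim:H2}, $t \leq (12\tilde{n}^5)^d$, and by \Cref{claim:H1}, $|\mathcal{F}_{\tilde{n},\tilde{k},\tilde{\alpha}}| > 2(12\tilde{n}^5)^d$. Chaining these inequalities, $|R(I)| \leq 2t \leq 2(12\tilde{n}^5)^d < |\mathcal{F}_{\tilde{n},\tilde{k},\tilde{\alpha}}|$, which is exactly the desired strict inequality. There is essentially no obstacle here; the only subtlety is ensuring that the bound $|Q(b)| \leq t$ holds for \emph{every} realization $b$ (not just in expectation), which follows from the worst-case running-time definition of $t$ used in \Cref{claim:H2}. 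The punch line of the overall argument — delivered in the next step of the proof — will be that any $S \in \mathcal{F}_{\tilde{n},\tilde{k},\tilde{\alpha}} \setminus R(I)$ supplies a ``yes''-instance $I_S$ indistinguishable from $I$ with probability $>\tfrac{1}{2}$, contradicting correctness of $\mathcal{A}$.
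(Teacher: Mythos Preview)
Your proof is correct and follows essentially the same approach as the paper: both bound $|R(I)|\le 2t$ via the observation that $\sum_{S\in\mathcal{F}}\Pr(S\in Q(\bar b))\le t$ (you phrase this as $\mathbb{E}[|Q(\bar b)\cap\mathcal{F}|]\le t$, the paper writes out the double sum and swaps the order of summation), and then invoke Claims~\ref{claim:H1} and~\ref{claim:H2} to conclude. Your presentation is arguably a bit cleaner, but there is no substantive difference.
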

		\begin{claimproof}
			By the definition of $R(I)$, it holds that 
				\begin{equation*}
				\label{eq:R0}
				\begin{aligned}
					|R(I)| ={} & \left| \left\{S \in \mathcal{F}_{\tilde{n},\tilde{k},\tilde{\alpha}}~\bigg|~\Pr \left(S \in Q(\bar{b}) \right) \geq \frac{1}{2} \right\} \right| \\
					\leq{} & 2 \cdot \sum_{S \in \mathcal{F}_{\tilde{n},\tilde{k},\tilde{\alpha}}} \Pr \left(S \in Q(\bar{b}) \right)  \\
					={} & 	 2 \cdot \sum_{S \in \mathcal{F}_{\tilde{n},\tilde{k},\tilde{\alpha}}~} \sum_{b \in \{0,1\}^{t}} \Pr \left(S \in Q(b) \right) \cdot \Pr(\bar{b} = b)
				\end{aligned}
			\end{equation*} Thus, by changing the order of summation, we have 
				\begin{equation*}
				\begin{aligned}
					|R(I)| \leq 2 \cdot  \sum_{b \in \{0,1\}^{t}~} \Pr(\bar{b} = b) \cdot \sum_{S \in \mathcal{F}_{\tilde{n},\tilde{k},\tilde{\alpha}}} \Pr \left(S \in Q(b) \right) 
					\leq 2 \cdot  \sum_{b \in \{0,1\}^{t}~} \Pr(\bar{b} = b) \cdot |Q(b)| .
				\end{aligned}
			\end{equation*} 
			
Since $\left|Q(b)\right| \leq t$ for all $b \in \{0,1\}^t$, by the above we have
			\begin{equation*}
				|R(I)| 	\leq 2  t \cdot \sum_{b \in \{0,1\}^{t}~} \Pr (\bar{b} = b) = 2  \cdot t \leq 2\cdot	\left(	12 \cdot \tilde{n}^5 \right)^{d} 
				< |\mathcal{F}_{\tilde{n},\tilde{k},\tilde{\alpha}}|. 
			\end{equation*}
			The second inequality follows from \Cref{claim:H2}. The last inequality holds by \Cref{claim:H1}. 
		\end{claimproof}
	
	By \Cref{claim:H3}, there exists $S \in F_{\tilde{n},\tilde{k},\tilde{\alpha}} \setminus R(I)$. 
		Consider the induced oracle-EMB instance 
		$I_S = I_{\tilde{n},\tilde{k},\tilde{\alpha}}(\Pi_S)$ where $\Pi_{S} = \{S\}$, and let $B = \{b \in \{0,1\}^t~|~ S \notin Q(b)\}$ be all strings for which $S$ is not queried by $\cA$ on the instance $I$ .
	Observe that for all $b \in B$ it holds that the answers to all queries for $T \in Q(b)$ are the same for both oracles (of $M_{\tilde{n},\tilde{k},\tilde{\alpha}}(\Pi_{\emptyset})$ and $M_{\tilde{n},\tilde{k},\tilde{\alpha}}(\Pi_{S})$). Moreover, the decision on which set to query next depends only on $b,\tilde{n},\tilde{k},\tilde{\alpha}$, and the answers to previous queries.  
	
	Hence, for all $b \in B$,  the executions of $\cA$ on the instances $I$ and $I_S$ are identical.
		Since $\Pi_{\emptyset} = \emptyset$, by 
		\Cref{obs:YesNo}, $I$ is a ``no''-instance for oracle-EMB; thus, $\cA$ returns that $I_S$ is a ``no''-instance for every $b\in B$. However, since $\textsf{id}_n(S) = \textsf{sum}(S) = \alpha$, $|S| = k$, and $S \in \Pi_{S}$, it follows that $I_S$ is a ``yes''-instance by \Cref{obs:YesNo}.  
		%
		%
		Therefore, $\cA$ does not decide $I_S$ correctly for all $b \in B$.  We give an illustration in \Cref{fig:proof}. Since  $S \notin R(I)$, it holds that $\Pr \left( \bar{b} \in B \right) = \Pr \left(S \notin Q(\bar{b}) \right) > \frac{1}{2}$; thus, with probability greater than $\frac{1}{2}$, $\cA$ does not decide correctly the instance $I_S$. 
		%
		A contradiction to the correctness of  $\cA$  as a randomized algorithm for oracle-EMB. The statement of the theorem follows. 
	\end{proof}

\section{Hardness of Matroid Optimization with a Linear Constraint}
\label{sec:tight_absolute}
In this section we use \Cref{thm:emb} 
to prove \Cref{thm:main}. We apply the following reductions from EMB to MOL problems. Recall that $\cQ = \cP \setminus \{(\min,\IS,\leq)\}$ is the set of parameters for non-trivial MOL problems. Given a $P$-MOL problem for some $P \in \cQ$, and an EMB instance $I$, the reduction returns an instance $R_P(I)$ of the $P$-MOL problem. Note that the reduction is purely mathematical, and 
does not specify the encoding of the instance. This will be useful for obtaining our hardness results in non-oracle computational models (see \Cref{sec:SAT}). 

Given $(\opt,\cF,\triangleleft), (\opt',\cF',\triangleleft') \in \cQ$, we use the notation $(\opt \textnormal{ is } \opt')$,  $(\cF \textnormal{ is } \cF')$, $(\triangleleft \textnormal{ is } \triangleleft')$ to denote the boolean expressions of equality between parameters of a MOL problem. For example, $(\opt \textnormal{ is } \opt')$ is {\em true} if and only if either $\opt,\opt'$ are both $\max$, or $\opt,\opt'$ are both $\min$. 


\begin{definition}
	\label{def:reduction}
	Given an \textnormal{EMB} instance $I =(E,\cI,c,T)$ and  $P \in \cQ$ where $P = (\opt,\cF,\triangleleft)$, define the
	 {\em reduced $P$-MOL instance} of $I$, denoted by  $R_P(I) = (E,\cI,v_I,w_{I,P},L_{I,P})$, as follows.
	\begin{enumerate}
		\item Define the auxiliary variable 
		\begin{align*}
			d(P) = & \begin{cases}
				0 & ~~\textnormal{if  } \bigg((\opt \textnormal{ is } \max)  \textnormal{ and }   (\triangleleft \textnormal{ is}   \leq) \bigg) \textnormal{ or } 
				\bigg((\opt \textnormal{ is } \min)  \textnormal{ and }   (\triangleleft \textnormal{ is}   \geq) \bigg)\\
				1 & ~~\textnormal{otherwise.}
			\end{cases}
		\end{align*} For example, if $P = (\max,\IS,\leq)$ then $d(P) = 0$, and if  $P' = (\max,\IS,\geq)$ then $d(P') = 1$. 
		\item Let $H_I = 2 \cdot \max \{1, c(E)\}$.
		\item For all $e \in E$ let $v_I(e) =  H_I +c(e)$. 
		\item For all $e \in E$ let $w_{I,P}(e) =  H_I +c(e) \cdot (-1)^{d(P)}$.
		\item Let $k_I = \max_{S \in \cI} |S|$ be the rank of $(E,\cI)$. 
		\item Define $L_{I,P} = k_I \cdot H_I +T \cdot  (-1)^{d(P)}$. 
	\end{enumerate}
\end{definition}

Now, for every EMB instance $I =(E,\cI,c,T)$, define the {\em error parameter} of $I$ as \begin{equation}
\label{eq:epsI}
\eps_I =  \frac{1}{8 \cdot (|E|+1) \cdot (T+1) \cdot \left(c(E)+1\right)}.
\end{equation}
Indeed, since the value selected for of the error parameter is sufficiently small, we can use a $(1+\eps_I)$-approximation for $R_P(I)$ to decide an EMB instance $I$.  
\begin{restatable}{theorem}{thmH}
	\label{thm:H}
	Given an instance $I =(E,\cI,c,T)$ of \textnormal{EMB}, and $P \in \cQ$ with $P = (\opt,\cF,\triangleleft)$, the following holds. 
	\begin{enumerate}
		\item If there is a solution $S$ for $R_P(I)$ such that $v_I(S) = k_I \cdot H_I +T$, then $I$ is a ``yes"-instance.  \label{h:1}
		\item  If $I$ is a ``yes"-instance then: 
		$(i)$ $R_P(I)$ has a solution, and $(ii)$ every $(1+\eps_I)$-approximate solution $S$ for $R_P(I)$ satisfies $v_I(S) = k_I \cdot H_I +T$. \label{h:2}
	\end{enumerate}
\end{restatable}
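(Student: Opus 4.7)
The reduction is calibrated so that $H_I = 2\max\{1,c(E)\}$ strictly exceeds $c(E)$ (in fact $H_I - c(E) \geq 1$), so in both $v_I(S) = |S|H_I + c(S)$ and $w_{I,P}(S) = |S|H_I + (-1)^{d(P)} c(S)$ the cardinality term $|S|H_I$ dominates. The whole proof rests on first showing that any feasible solution of interest must satisfy $|S| = k_I$, and then using the linear constraint together with the approximation gap to sandwich $c(S)$ to exactly $T$.

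For item~\ref{h:1}, I unfold the equation $|S|H_I + c(S) = v_I(S) = k_I H_I + T$; the assumption together with $c(S) \leq c(E)$ forces $T \leq c(E) < H_I$, so combined with $0 \leq c(S) < H_I$ the equation uniquely forces $|S| = k_I$ and $c(S) = T$. Then $S \in \cF(\cm)$ with $|S|$ equal to the rank is a basis, hence an EMB solution. For item~\ref{h:2}(i), given a basis $S^*$ with $c(S^*) = T$ (which exists since $I$ is a ``yes''-instance), I check $S^* \in \bases(\cm) \subseteq \IS(\cm)$, so it satisfies $\cF$ for either choice, and compute $w_{I,P}(S^*) = k_I H_I + (-1)^{d(P)} T = L_{I,P}$; hence the linear constraint holds at equality, both as ``$\leq$'' and as ``$\geq$''.

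For item~\ref{h:2}(ii), let $S$ be a $(1+\eps_I)$-approximate solution. I will argue $|S| = k_I$ by splitting on $\opt$: when $\opt = \min$, either $\cF = \bases$ (trivial) or $P = (\min,\IS,\geq)$, in which case $w_{I,P}(S) \geq L_{I,P}$ translates to $|S|H_I + c(S) \geq k_I H_I + T$, unattainable with $|S|<k_I$ since $c(S) \leq c(E) < H_I$. When $\opt = \max$, I use the approximation bound: by item~\ref{h:2}(i), $\OPT \geq v_I(S^*) = k_I H_I + T$; if $|S|<k_I$, then $v_I(S) \leq (k_I-1)H_I + c(E)$, and a short calculation using \eqref{eq:epsI} shows $\eps_I(k_I H_I + T) < 1 \leq H_I - c(E) + T$, contradicting $(1+\eps_I) v_I(S) \geq \OPT$. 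Once $|S| = k_I$, both $v_I(S)$ and $w_{I,P}(S)$ become affine in $c(S)$, and a case analysis over $(\opt,\triangleleft,d(P))$ shows that $w_{I,P}(S) \triangleleft L_{I,P}$ reduces uniformly to $c(S) \leq T$ for every maximization case and to $c(S) \geq T$ for every minimization case. The approximation guarantee provides the reverse one-sided inequality up to an additive slack $\eps_I(k_I H_I + T) < 1$; since $c$ and $T$ are integer-valued, this pins $c(S) = T$ and hence $v_I(S) = k_I H_I + T$.

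The main obstacle is purely organizational: the six triples in $\cQ$ each yield a slightly different linear constraint, and one needs to verify that $d(P)$ has been tuned so that the constraint always lands on the correct side of $T$ relative to the optimization direction. I plan to handle this uniformly by factoring out the sign $(-1)^{d(P)}$ at the start of the case analysis, rather than grinding through all six triples separately.
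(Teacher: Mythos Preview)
Your proposal is correct and follows essentially the same approach as the paper: the paper packages the one-sided bounds into two lemmas (showing every feasible $S$ has $v_I(S)\le k_IH_I+T$ in the max case and $\ge$ in the min case), then uses $\eps_I(k_IH_I+T)<\tfrac12$ together with integrality of $v_I(S)$, whereas you first isolate $|S|=k_I$ and then work directly with $c(S)$ and integrality of $c$---but the case analysis and the use of $d(P)$ to make the linear constraint land on the correct side of $T$ are identical.
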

Using \Cref{thm:H} and an assumed randomized Fully PTAS for the $P$-MOL problem, we can decide EMB in time which contradicts \Cref{thm:emb}. This gives the proof of \Cref{thm:main}. We first prove \Cref{thm:main}, and later give the proof of \Cref{thm:H}.   

 \thmMain*

\begin{proof}
	Assume towards a contradiction that there is a randomized Fully PTAS $\cA$ for oracle $P$-MOL. We use $\cA$ to decide oracle-EMB. Let $I = (E,\cI,c,T)$  be an oracle-EMB instance, and 
	consider the following randomized algorithm $\cB$ that decides $I$. 
	\begin{enumerate}
		\item Construct the oracle $P$-MOL instance $R_{P}(I)$
		with the membership oracle of $(E,\cI)$.\label{step:1} 
		\item 
		Execute $\cA$ with the input $R_{P}(I)$ and $\eps_I$. \label{step:2} 
		\item If $\cA$ returns that $R_{P}(I)$ does not have a solution $-$ Return ''no" on $I$. \label{step:3}  
		\item Otherwise, let $S \leftarrow \cA(R_{P}(I),\eps_I)$ be the solution returned  by $\cA$.\label{step:4} 
		\item Return ''yes" on $I$ if and only if $v_I(S) = H_I \cdot k_I+T$. \label{step:5} 
	\end{enumerate}
	
	Let $n = |E|+1$ and $m = c(E)+1$. Note that $R_{P}(I)$ can be naively constructed from $I$ in time $\left(n \cdot (T+2) \cdot m\right)^{O(1)}$ using \Cref{def:reduction}. 
	As $\cA$ is a randomized Fully PTAS for oracle $P$-MOL, and by the selection of the error parameter \eqref{eq:epsI}, the running time of $\cB$ on $I$ is $\left(n \cdot (T+2) \cdot m\right)^{O(1)}$. 
	We now show correctness. 
	\begin{itemize}
		\item If $\cB$ returns ``yes" on $I$, then Step~\ref{step:4} of the algorithm computes a solution $S$ for $R_p(I)$ satisfying $v_I(S) = H_I \cdot k_I+T$. Thus, by \Cref{thm:H}, $I$ is a ``yes" instance. 

		\item If $I$ is a ``yes" instance then $R_P(I)$ has a solution by \Cref{thm:H}.
		As $\cA$ is a randomized Fully PTAS, with probability at least $\frac{1}{2}$ $\cA$ returns a $(1+\eps_I)$-approximate solution $S$ for $R_P(I)$ in Step~\ref{step:4}. By \Cref{thm:H}, $v_I(S) = H_I \cdot k_I+T$ (with probability at least $\frac{1}{2}$). Thus, $\cB$ returns ``yes" on $I$ with probability at least $\frac{1}{2}$.
		\end{itemize}  Hence, $\cB$ is a randomized algorithm which decides the oracle-EMB instance $I$ in time \newline $\left(n \cdot (T+2) \cdot m\right)^{O(1)}$. This is a contradiction to \Cref{thm:emb}.  
\end{proof}

In the remainder of this section we prove \Cref{thm:H}. We start with some basic properties of the reduction outlined in \Cref{def:reduction}.   

\begin{lemma}
\label{lem:FD2}
Given an instance $I =(E,\cI,c,T)$ of \textnormal{EMB},
let $P \in \cQ$ and consider a solution $S$ for $R_P(I)$ satisfying $v_I(S) = k_I \cdot H_I+T$. Then, $S$ is a solution for $I$.  
\end{lemma}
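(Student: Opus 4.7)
The plan is to exploit the fact that $v_I$ splits additively into a ``cardinality part'' weighted by the large constant $H_I$ and a ``weight part'' equal to $c$, so that the equation $v_I(S) = k_I \cdot H_I + T$ simultaneously encodes two pieces of information: $|S| = k_I$ and $c(S) = T$. Concretely, I would first write $v_I(S) = |S| \cdot H_I + c(S)$ by additivity and the definition $v_I(e) = H_I + c(e)$, and then rearrange the hypothesis into the identity $(k_I - |S|) \cdot H_I = c(S) - T$.

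Next, I would argue $|S| = k_I$. Since $S$ is a solution to $R_P(I)$, it lies in $\cF(\cm) \subseteq \cI$, so $|S| \leq k_I$ by the definition of rank; hence the left-hand side of the identity is non-negative. If it were strictly positive, then it would be at least $H_I$. But the right-hand side satisfies $c(S) - T \leq c(S) \leq c(E)$, and the choice $H_I = 2 \cdot \max\{1, c(E)\}$ guarantees $H_I > c(E)$, yielding a contradiction. Therefore $|S| = k_I$, which together with $S \in \cI$ means $S$ is a basis of $(E,\cI)$. Substituting $|S| = k_I$ back into the identity gives $c(S) = T$, so $S$ satisfies the EMB solution conditions.

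The only mildly subtle point — and the reason the multiplier $H_I$ was defined to exceed $c(E)$ — is that a single unit of ``cardinality deficit'' must be too expensive to compensate by any possible shift in $c(S) - T$. Once this separation of scales is noticed, the argument is essentially one line of comparison. No use is made of the linear constraint $w_{I,P}(S) \triangleleft L_{I,P}$ or of the specific choice of $P \in \cQ$; only the matroid independence of $S$ and the value equation matter, which is why the lemma can be stated uniformly across all non-trivial MOL parameters.
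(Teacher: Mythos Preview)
Your proof is correct and follows essentially the same approach as the paper's: both use that $v_I(S) = |S|\cdot H_I + c(S)$ together with $|S|\le k_I$ and $H_I > c(E)$ to force $|S| = k_I$, and then read off $c(S)=T$. The only cosmetic difference is that you rearrange into the identity $(k_I-|S|)H_I = c(S)-T$ while the paper runs a direct chain of inequalities to a contradiction.
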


\begin{proof}
	Let $\cm = (E,\cI)$. As $S$ is a solution for $R_P(I)$, it holds that $S \in \IS(\cm)$; thus, $|S| \leq k_I$. Assume towards contradiction that $|S|<k_I$. Then, 
	$$v_I(S) = |S| \cdot H_I+c(S) \leq (k_I-1) \cdot H_I+c(S) \leq (k_I-1) \cdot H_I+c(E)<k_I \cdot H_I \leq k_I \cdot H_I+T.$$ 
	We reach a contradiction since $v_I(S) = k_I \cdot H_I+T$; thus, $|S| = k_I$, and  
	\begin{equation}
		\label{eq:S=k}
		k_I \cdot H_I+T = v_I(S) =   |S| \cdot H_I+c(S) =  k_I \cdot H_I+c(S). 
	\end{equation}  As $|S| = k_I$, we have that $S$ is a basis of $\cm$, and by \eqref{eq:S=k}, $c(S) = T$. Hence, $S$ is a solution for $I$.   
\end{proof}

The next result is the converse of the statement in \Cref{lem:FD2}.   

	\begin{lemma}
	\label{lem:FD1}
	Let $S$ be a solution for a given \textnormal{EMB} instance $I =(E,\cI,c,T)$,
	and let $P \in \cQ$ where $P = (\opt,\cF,\triangleleft)$. Then, $S$ is a solution for $R_P(I)$ of value $v_I(S) = k_I \cdot H_I+T$.  
\end{lemma}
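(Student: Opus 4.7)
The plan is a direct verification from the definitions. Since $S$ is a solution for the EMB instance $I$, by \Cref{def:EMB} we have that $S$ is a basis of $(E,\cI)$ with $c(S) = T$; in particular $|S| = k_I$ and $S \in \cI$. I will use this to check the three conditions required to conclude the lemma: (i) the matroid feasibility constraint $S \in \cF(\cm)$, (ii) the linear constraint $w_{I,P}(S) \triangleleft L_{I,P}$, and (iii) the value equality $v_I(S) = k_I \cdot H_I + T$.

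For (i), since $|S| = k_I = \textnormal{rank}(\cm)$ and $S \in \cI$, we have $S \in \bases(\cm) \subseteq \IS(\cm)$, so $S \in \cF(\cm)$ regardless of whether $\cF = \IS$ or $\cF = \bases$. For (iii), the value computation is immediate from the definition $v_I(e) = H_I + c(e)$:
\begin{equation*}
v_I(S) \;=\; \sum_{e \in S}\bigl(H_I + c(e)\bigr) \;=\; |S|\cdot H_I + c(S) \;=\; k_I \cdot H_I + T,
\end{equation*}
using $|S|=k_I$ and $c(S)=T$.

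The remaining step, (ii), is the only one that briefly engages with the case analysis on $P$. Using $w_{I,P}(e) = H_I + c(e)\cdot (-1)^{d(P)}$, the same computation yields
\begin{equation*}
w_{I,P}(S) \;=\; |S|\cdot H_I + c(S)\cdot(-1)^{d(P)} \;=\; k_I \cdot H_I + T\cdot(-1)^{d(P)} \;=\; L_{I,P}.
\end{equation*}
Since $w_{I,P}(S) = L_{I,P}$, the relation $w_{I,P}(S) \triangleleft L_{I,P}$ holds whether $\triangleleft$ is ``$\leq$'' or ``$\geq$''. Combining (i) and (ii) shows $S$ is a feasible solution of $R_P(I)$, and (iii) gives the claimed value.

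I do not expect any real obstacle here; the definition of $R_P(I)$ was set up precisely so that a feasible EMB solution of weight exactly $T$ translates to a $P$-MOL-feasible set whose linear weight equals the bound $L_{I,P}$ on the nose, independent of the sign chosen by $d(P)$. The only thing to be careful about is to observe that $S$ being a basis takes care of the $\cF = \bases$ case as well as the $\cF = \IS$ case, so no additional sub-case analysis on $\cF$ is needed.
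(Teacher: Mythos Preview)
Your proof is correct and follows essentially the same approach as the paper: verify $S\in\cF(\cm)$ from $S$ being a basis, compute $w_{I,P}(S)=L_{I,P}$ directly from $|S|=k_I$ and $c(S)=T$, and compute $v_I(S)=k_I\cdot H_I+T$ the same way. The only cosmetic difference is the order in which you present the value and weight computations, and that you make explicit the observation that equality $w_{I,P}(S)=L_{I,P}$ suffices for either choice of $\triangleleft$.
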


\begin{proof}
	Let $\cm = (E,\cI)$. Since $S$ is a solution for $I$ we have $S \in \bases(\cm)$; thus, $S \in \cF(\cm)$. Then,    
	$$w_{I,P}(S) = |S| \cdot H_I+c(S) \cdot (-1)^{d(P)} =  k_I \cdot H_I+T \cdot (-1)^{d(P)} = L_{I,P}.$$
	The second equality holds since $S$ is a solution for $I$; thus, $|S| = k_I$ (as $S$ is a basis of $\cm$), and $c(S) = T$. We conclude that $S$ is a solution for $R_P(I)$. Finally, note that $S$ satisfies 
\begin{align*}
~~~~~~~~~~~~~~~~~~~~~~~~~~~~~	v_I(S) = |S| \cdot H_I + c(S)
	= k_I \cdot H_I +T ~~~~~~~~~~~~~~~~~~~~~~~~~~~~~~~~~~~~~~~~~~~~~~~~~~~~~~~~ \qedhere
\end{align*}
\end{proof}

The next claim gives an upper bound on the optimal value for maximization MOL problems. We then derive an analogous lower bound for minimization (non-trivial) MOL problems.

	\begin{lemma}
	\label{lem:OptMaxBound}
	Let $I =(E,\cI,c,T)$ be an \textnormal{EMB} instance and $P \in \cQ$, where $P = (\opt,\cF,\triangleleft)$ and $\left(\opt \textnormal{ is } \max\right)$. Then, for every solution $S$ of $R_P(I)$ it holds that $v_I(S) \leq k_I \cdot H_I+T$.  
\end{lemma}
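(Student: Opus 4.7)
My plan is to split the argument into two cases based on the cardinality of $S$, and in each case use the explicit formulas from \Cref{def:reduction}. First observe that by definition $v_I(S) = |S| \cdot H_I + c(S)$ and that any solution $S$ to $R_P(I)$ satisfies $S \in \cF(\cm)$, where $\cm = (E,\cI)$. Since $\cF \in \{\IS,\bases\}$, in either case $|S| \leq k_I$.

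For the easy case $|S| < k_I$, I would simply bound
\[
v_I(S) = |S|\cdot H_I + c(S) \leq (k_I-1)\cdot H_I + c(E).
\]
Since $H_I = 2\max\{1,c(E)\} \geq 2c(E)$, the right-hand side is at most $k_I \cdot H_I - H_I + c(E) \leq k_I \cdot H_I - c(E) \leq k_I\cdot H_I + T$ (using $T \geq 0$). Note that this argument did not use the linear constraint $w_{I,P}(S) \triangleleft L_{I,P}$; it is precisely the large gap $H_I$ that makes short sets automatically cheap.

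For the main case $|S| = k_I$, it suffices to show $c(S) \leq T$, since then $v_I(S) = k_I\cdot H_I + c(S) \leq k_I\cdot H_I + T$. Here I would invoke the linear constraint, splitting on $\triangleleft$. If $\triangleleft$ is $\leq$, then $d(P) = 0$, so $w_{I,P}(e) = H_I + c(e)$ and $L_{I,P} = k_I\cdot H_I + T$; hence
\[
k_I\cdot H_I + c(S) = w_{I,P}(S) \leq L_{I,P} = k_I\cdot H_I + T,
\]
giving $c(S) \leq T$. If $\triangleleft$ is $\geq$, then $d(P) = 1$, so $w_{I,P}(e) = H_I - c(e)$ and $L_{I,P} = k_I\cdot H_I - T$; thus
\[
k_I\cdot H_I - c(S) = w_{I,P}(S) \geq L_{I,P} = k_I\cdot H_I - T,
\]
which again yields $c(S) \leq T$.

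The main obstacle is really only notational bookkeeping: one must carefully track the sign parameter $d(P)$ and verify that in both sub-cases of $\opt = \max$, the constraint $w_{I,P}(S) \triangleleft L_{I,P}$ translates to an upper bound $c(S) \leq T$ on the original weight. Once this is handled, combining the two cases completes the proof.
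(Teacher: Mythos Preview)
Your proof is correct and follows essentially the same approach as the paper: split on whether $|S|<k_I$ (where the large value of $H_I$ alone suffices) or $|S|=k_I$ (where the linear constraint, in each of the two sub-cases for $\triangleleft$, yields $c(S)\le T$). The only cosmetic difference is that in the first case the paper writes the bound as $(k_I-1)H_I+c(E)<k_I H_I$ directly, whereas you spell out the intermediate step $k_I H_I - H_I + c(E)\le k_I H_I - c(E)$.
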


\begin{proof}
	Let $S$ be an optimal solution for $R_P(I)$. Thus, $|S| \leq k_I$ as $S \in \cI$. If $|S| < k_I$ then
	 \begin{equation*}
			v_I(S) \leq (k_I-1) \cdot H_I+c(S) \leq (k_I-1) \cdot H_I+c(E) <k_I \cdot H_I \leq k_I \cdot H_I+T.
		\end{equation*}
	Otherwise, $|S| = k_I$. Consider the two cases for $\triangleleft$.
			\begin{enumerate}
		\item $(\triangleleft \textnormal{ is}   \leq)$. Then, $d(P) = 0$ (see \Cref{def:reduction}); thus, since $S$ is a solution for $R_P(I)$: $$v_I(S) = w_{I,P}(S) \leq L_{I,P} = k_I \cdot H_I+T.$$ 
		\item $(\triangleleft \textnormal{ is}   \geq)$. Then, $d(P) = 1$. As $S$ is a solution for $R_P(I)$,
		\begin{equation}
			\label{eq:Dp=1}
			  k_I \cdot H_I-c(S)=|S|\cdot H_I-c(S) =w_{I,P}(S) \geq L_{I,P} = k_I \cdot H_I-T.
		\end{equation} 
		By \eqref{eq:Dp=1}, it follows that $c(S) \leq T$; thus, $	v_I(S) = k_I \cdot H_I+c(S) \leq k_I \cdot H_I+T$. 
	\end{enumerate} 
	
In all the above cases, we have that $v_I(S) \leq k_I \cdot H_I+T$, implying the statement of the lemma. 
\end{proof}

Now, for minimization problems we have the next result. 
	\begin{lemma}
	\label{lem:OptMinBound}
	Let $I =(E,\cI,c,T)$ be an \textnormal{EMB} instance, and $P \in \cQ$, where $P = (\opt,\cF,\triangleleft)$ and $\left(\opt \textnormal{ is } \min\right)$. Then, for every solution $S$ of $R_P(I)$, it holds that $v_I(S) \geq k_I \cdot H_I+T$.  
\end{lemma}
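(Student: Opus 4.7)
The plan is to mimic the case analysis carried out in the proof of \Cref{lem:OptMaxBound}, but now splitting on the value of $d(P)$. Since $P \in \cQ$ with $(\opt \textnormal{ is } \min)$, there are exactly three possibilities, namely $P \in \{(\min,\bases,\leq),(\min,\bases,\geq),(\min,\IS,\geq)\}$. By \Cref{def:reduction}, the first of these has $d(P)=1$ while the other two have $d(P)=0$. The overall strategy is to convert the constraint on $w_{I,P}(S)$ into the desired lower bound on $v_I(S)$, by exploiting a simple per-element relation between $v_I$ and $w_{I,P}$ that depends on $d(P)$.

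First I would dispense with the case $d(P)=0$. A direct inspection of \Cref{def:reduction} reveals that $v_I(e) = w_{I,P}(e) = H_I + c(e)$ for every $e\in E$, so $v_I(S) = w_{I,P}(S)$. In both sub-cases with $d(P)=0$, the relation $\triangleleft$ is $\geq$, and since $S$ is a solution for $R_P(I)$ we have $w_{I,P}(S) \geq L_{I,P} = k_I\cdot H_I + T$, which gives the claimed bound immediately. Note that this argument does not depend on whether $\cF$ is $\bases$ or $\IS$, so it simultaneously covers both $(\min,\bases,\geq)$ and $(\min,\IS,\geq)$.

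Next I would handle the remaining case $P=(\min,\bases,\leq)$, where $d(P)=1$. Here $\cF = \bases$ forces $|S| = k_I$. The key observation is the complementary identity $v_I(e) + w_{I,P}(e) = 2H_I$ for every $e\in E$, implying $v_I(S) + w_{I,P}(S) = 2k_I \cdot H_I$. Combining this with the linear constraint $w_{I,P}(S) \leq L_{I,P} = k_I\cdot H_I - T$ then yields $v_I(S) \geq k_I\cdot H_I + T$, as required.

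I do not expect any serious obstacle here: the proof is essentially a mirror image of \Cref{lem:OptMaxBound}, and the main conceptual point is recognising the correct duality between $v_I$ and $w_{I,P}$, namely that they coincide per element when $d(P)=0$ and sum to a constant per element when $d(P)=1$. Unlike \Cref{lem:OptMaxBound}, no separate treatment of the sub-case $|S|<k_I$ is needed: for $d(P)=0$ the bound drops out of the linear constraint regardless of $|S|$, and for $d(P)=1$ the basis constraint pins down $|S|=k_I$ directly. The only care required is in correctly expanding the sign $(-1)^{d(P)}$ in the definitions of $w_{I,P}$ and $L_{I,P}$ in each case.
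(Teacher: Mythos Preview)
Your proposal is correct and follows essentially the same approach as the paper: both split on $d(P)$ (equivalently on $\triangleleft$), handle the $d(P)=0$ case by noting $v_I=w_{I,P}$ and reading off the linear constraint, and handle $d(P)=1$ by using $|S|=k_I$ from the basis constraint together with the relation between $v_I$ and $w_{I,P}$. The only cosmetic difference is that in the $d(P)=1$ case the paper expands $w_{I,P}(S)=k_I H_I - c(S)$ to deduce $c(S)\geq T$ and then substitutes into $v_I(S)=k_I H_I + c(S)$, whereas you phrase the same computation via the per-element identity $v_I(e)+w_{I,P}(e)=2H_I$.
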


\begin{proof}
	Let $S$ be a solution for $R_P(I)$. We consider two cases. 
	\begin{enumerate}
		\item $(\triangleleft \textnormal{ is}   \geq)$. Then, $d(P) = 0$; since $S$ is a solution for $R_P(I)$, $$v_I(S) = w_{I,P}(S) \geq L_{I,P} = k_I \cdot H_I+T.$$ 
		\item  $(\triangleleft \textnormal{ is}   \leq)$. Then, $d(P) = 1$; since $P$ is a non-trivial MOL problem (i.e., $P \in \cQ$), it follows that $\left(\cF \textnormal{ is } \bases\right)$. Thus, $S$ is a solution for $R_P(I)$ such that $|S| = k_I$, and
		\begin{equation}
			\label{eq:Dp=2}
			 	  k_I \cdot H_I-c(S)= |S| \cdot H_I-c(S) =w_{I,P}(S) \leq L_{I,P} = k_I \cdot H_I-T.
		\end{equation} 
		By \eqref{eq:Dp=2} it follows that $c(S) \geq T$; thus, $v_I(S) = k_I \cdot H_I+c(S) \geq k_I \cdot H_I+T$. 
	\end{enumerate}
In all of the above cases, we have that $v_I(S) \geq k_I \cdot H_I+T$, implying the statement of the lemma. 
\end{proof}

Using \Cref{lem:FD1,lem:FD2,lem:OptMaxBound,lem:OptMinBound}, we can now prove \Cref{thm:H}. 
\thmH*
\begin{proof}
We note that Property~\ref{h:1} follows directly from \Cref{lem:FD2}. For Property~\ref{h:2}, assume that $I$ is a ``yes"-instance, then by \Cref{lem:FD1}, there is a solution $D$ for $R_P(I)$ such that $v_I(D) = H_I \cdot k_I+T$. It remains to show Property~\ref{h:2}. $(ii)$.
Let $S$ be a $(1+\eps_I)$-approximate solution for $R_P(I)$. We distinguish between two cases. 
	
		\begin{enumerate}
		\item $(\opt \textnormal{ is}   \max)$. Then, by \Cref{lem:OptMaxBound},
		\begin{equation*}
			\label{eq:Max1}
			0 \leq H_I \cdot k_I+T-v_I(S).
		\end{equation*} Moreover, since $S$ is a $(1+\eps_I)$-approximate solution for $R_P(I)$, and $D$ is a solution for $R_P(I)$,
		\begin{equation*}
			\label{eq:MaxApprox}
	H_I \cdot k_I+T-v_I(S) \leq H_I \cdot k_I+T - \frac{v_I(D)}{(1+\eps_I)} =  \frac{\eps_I \cdot \left(H_I \cdot k_I+T \right) }{(1+\eps_I)} \leq \eps_I \cdot \left(H_I \cdot k_I+T \right). 
		\end{equation*} 
	By the above, it follows that
	\begin{equation*}
			  \left|	v_I(S)-\left(H_I \cdot k_I+T\right) \right|	\leq 	\eps_I \cdot \left(H_I \cdot k_I+T\right). 
	\end{equation*}
		\item  $(\opt \textnormal{ is}   \min)$. This case is analogous to the above.  By \Cref{lem:OptMinBound}, 
		\begin{equation*}
			\label{eq:Min1}
	0 \leq v_I(S) - \left(H_I \cdot k_I+T\right). 
		\end{equation*} Since $S$ is a $(1+\eps_I)$-approximate solution for $R_P(I)$, and $D$ is a solution for $R_P(I)$,
		\begin{equation*}
			\label{eq:MinApprox}
		v_I(S) - \left(H_I \cdot k_I+T\right) \leq (1+\eps_I) \cdot v_I(D) - \left(H_I \cdot k_I+T\right) =   \eps_I \cdot \left(H_I \cdot k_I+T \right). 
		\end{equation*} 
		By the above, 
		\begin{equation*}
			\left|	v_I(S)-\left(H_I \cdot k_I+T\right) \right|	\leq 	\eps_I \cdot \left(H_I \cdot k_I+T\right). 
		\end{equation*}

	\end{enumerate} 
	Thus in both cases it holds that, 
	\begin{equation}
		\label{eq:combined}
		\left|	v_I(S)-\left(H_I \cdot k_I+T\right) \right|	\leq 	\eps_I \cdot \left(H_I \cdot k_I+T\right). 
	\end{equation}
	Let $n = |E|+1$ and $m = c(E)+1$. Then, by the selection of $\eps_I$ in \eqref{eq:epsI}, 
	\begin{equation}
		\label{eq:MinMax}
		\eps_I \cdot \left(H_I \cdot k_I+T\right) = \frac{H_I \cdot k_I+T}{8 \cdot n \cdot (T+1) \cdot m} 
		\leq\frac{2 \cdot m \cdot n + T}{8 \cdot n \cdot (T+1) \cdot m} 
		<\frac{4 \cdot m \cdot n \cdot (T+1)}{8 \cdot n \cdot (T+1) \cdot m} 
		= \frac{1}{2}. 
	\end{equation}
The first inequality holds since $k_I \leq |E|$ and $H_I \leq 2 \cdot m$.	Therefore, by \eqref{eq:combined} and \eqref{eq:MinMax}, 
	\begin{equation}
		\label{eq:FinalMinMax}
		  \left|	v_I(S)-\left(H_I \cdot k_I+T\right) \right|	\leq 	\eps_I \cdot \left(H_I \cdot k_I+T\right) <  \frac{1}{2}.
	\end{equation} Since $v_I(S) \in \mathbb{N}$ by \Cref{def:reduction}, it follows from \eqref{eq:FinalMinMax} that $v_I(S) = H_I \cdot k_I+T$. This gives
	the statement of the theorem. 
\end{proof}

\section{Lower Bounds in the Standard Computational Model}
\label{sec:SAT}
Our hardness result in \Cref{sec:Pi} shows that {\sc oracle Exact Matroid Basis (EMB)} is hard, leading to the unconditional lower bounds for all non-trivial oracle MOL problems in~\Cref{sec:tight_absolute}. Nonetheless, these hardness results consider matroids with general membership oracles, and do not give a lower bound for matroids that can be efficiently encoded. This is particularly important, as in some settings oracle models differ from non-oracle models w.r.t complexity~\cite{canetti2004random,chang1994random}. Moreover, some matroids show up in problems that can be encoded efficiently.
 This includes partition matroids, graphic matroids, linear matroids, etc (see, e.g., \cite{Sc03} for a survey on various families of matroids). Next, we formally define an efficient encoding of matroids.

	\begin{definition}
		\label{def:encode}
		
		A function $f:\{0,1\}^* \rightarrow 2^{\mathbb{N}} \times 2^{2^{\mathbb{N}}}$ is called {\em matroid decoder} if for every $I \in \{0,1\}^*$ it holds that $f(I) = \left(E_{f(I)}, \cI_{f(I)} \right)$ is a matroid, and the following holds. \begin{enumerate}
			\item There is an algorithm that given $I \in \{0,1\}^*$ returns $E_{f(I)}$ in time $|I|^{O(1)}$.\label{cond:1}
			\item There is an algorithm that given $I \in \{0,1\}^*$ and $S \subseteq E_{f(I)}$ decides if $S \in \cI_{f(I)}$ in time $|I|^{O(1)}$. \label{cond:2}
		\end{enumerate} 
	\end{definition}

	There is a simple matroid decoder that can decode every matroid $(E,\cI)$ (such that $E\subseteq \mathbb{N}$), in which the encoding $I$ explicitly lists $\cI$. 
	However, using such a matroid decoder, the encoding size of a matroid might be very large, up to $|I|=\Omega\left(2^{|E|}\right)$, while we often seek algorithms with running times polynomial in $|E|$. 
	One way to overcome this difficulty is via the oracle model considered  in previous sections. However, our results in this model may suggest that the hardness of EMB and MOL problems is 	due to the intrinsic hardness of the oracle model.
	 Yet, there are families of matroids with very efficient encoding. For example, a {\em uniform matroid} $(E,\cI)$, where $\cI = \{S \subseteq E~|~|S| \leq k\}$, can be efficiently encoded using $I = (E,k) \in \{0,1\}^*$. Clearly, the time to decide membership of a given subset $S \subseteq E$ depends only on $|S|$ and $k$. Another example is given in \Cref{fig:M1}.

We start with a definition of an encoded variant of EMB. We technically define a different problem for every decoder $f$. The definition of the problem is analogous to the versions of EMB considered earlier in the paper, besides that the matroid is given via an arbitrary bit-string $I \in \{0,1\}^*$, that a matroid decoder $f$ decodes into a matroid $f(I)$. 

\problem{{\sc $f$-decoded Exact Matroid Basis} ($f$-decoded EMB)}{
	{\bf Decoder} & $f:\{0,1\}^* \rightarrow 2^\mathbb{N} \times 2^{2^{\mathbb{N}}}$ is a matroid decoder.\\
	{\bf Instance} &  $(I,c,T)$, where $I \in \{0,1\}^*$, $c:E_{f(I)} \rightarrow \mathbb{N}$, $T \in \mathbb{N}$.\\
	{\bf Solution} & A basis $S$ of the matroid $f(I)$ such that $c(S)=T$.\\
	{\bf Objective} & Decide if there is a solution.
}

As a simple example, consider the $f_{\textsf{u}}$-decoded EMB problem, for a specific matroid decoder $f_{\textsf{u}}$ that decodes uniform matroids. The matroid decoder $f_{\textsf{u}}$ interprets every $I \in \{0,1\}^*$ as $I = (E,k)$ where $E$ is a set (of numbers) and $k \in \mathbb{N}$, and returns the uniform matroid $f_{\textsf{u}}(I) = (E,\cI)$ such that $\cI = \{S \subseteq E~|~|S| \leq k\}$; clearly, $f_{\textsf{u}}$ is a matroid decoder. Thus, an instance of $f_{\textsf{u}}$-decoded EMB is a tuple $U = ((E,k),c,T)$ and a solution of $U$ is $S \subseteq E$ such that $|S| = k$ and $c(S) = T$; the goal, as before, is to decide if there is a solution. This problem is commonly known as the $k$-{\sc subset sum}.

	\begin{figure}
	\centering
	
	\begin{tikzpicture}[thick]
		\tikzstyle{edge} = [->, line width=1pt]
		\node (v1) at (0,2) {};
		\node (v5) at (9,5) {};
		\node (v6) at (9,4.5) {};
		\node (v7) at (9,4) {};

		\node (v2) at (1.5,3) {};
		\node at (2,1) {$\bf \textcolor{black}{U}$};
		\node (v3) at (4,2.5) {};
		\node (v4) at (7,2) {};
		\node (v5) at (7,3) {};
		\node at (7,1) {$\bf \textcolor{black}{V}$};
		\node at (12,1) {$\bf \textcolor{black}{W}$};
		\node (u0) at (11,2.5) {};
		\node (u1) at (12,2.5) {};
		\node (u2) at (13,2.5) {};

		\begin{scope}[fill opacity=0.9]
			\filldraw[fill=yellow!70] ($(v1)+(-0.5,0)$) 
			to[out=90,in=180] ($(v2) + (0,0.5)$) 
			to[out=0,in=90] ($(v3) + (1,0)$)
			to[out=270,in=0] ($(v2) + (1,-0.8)$)
			to[out=180,in=270] ($(v1)+(-0.5,0)$);

			\filldraw[fill=green!50] ($(v4)+(-1.5,0.2)$)
			to[out=90,in=180] ($(v4)+(1,2)$)
			to[out=0,in=90] ($(v4)+(0.6,0.3)$)
			to[out=270,in=0] ($(v4)+(1,-0.6)$)
			to[out=180,in=270] ($(v4)+(-1.5,0.2)$);

				\filldraw[fill=red!70] ($(u1)+(-3,0)$) 
				to[out=90,in=180] ($(u2) + (0,0.5)$) 
				to[out=0,in=90] ($(u0) + (4,0)$)
				to[out=270,in=0] ($(u2) + (0,-0.8)$)
				to[out=180,in=270] ($(u1)+(-3,0)$);
			\end{scope}

			\filldraw[fill=brown!70] (v1) circle (0.1) node [right] {$u_1$};
			\filldraw[fill=darkgray!70] (v2) circle (0.1) node [below left] {$u_2$};
			\filldraw[fill=blue!70] (v3) circle (0.1) node [left] {$u_3$};
			\filldraw[fill=red!70] (v4) circle (0.1) node [below] {$v_1$};
			\filldraw[fill=purple!70] (v5) circle (0.1) node [below] {$v_2$};
			\filldraw[fill=white!70] (u1) circle (0.1) node [right] {$w_3$};
			\filldraw[fill=gray!70] (u2) circle (0.1) node [right] {$w_2$};
			\filldraw[fill=pink!70] (u0) circle (0.1) node [left] {$w_1$};
		\end{tikzpicture}
		\caption{\label{fig:M1} An example of a {\em partition matroid} $(E,\cI)$, which  can be efficiently encoded. The ground set is $E = \{u_1,u_2,u_3,v_1,v_2,w_1,w_2,w_3\}$, partitioned into three sets: $U,V,W$. The independent sets are all subsets of $E$ containing at most one element from $U,V$, and $W$; that is, $\cI = \{S \subseteq E~|~\forall X \in \{U,V,W\}: |S \cap X| \leq 1\}$. A simple efficient encoding of $(E,\cI)$ is $I = (E,U,V,W)$. 
			Membership can be decided efficiently given $I$, by checking the feasibility
			of a given set $S$ w.r.t. $U,V$ and $W$.} 
	\end{figure}

Recall that \Cref{thm:emb} asserts that oracle-EMB does not admit a pseudo-polynomial time algorithm. However, 
 this does not rule out that hypothetically, for every matroid decoder $f$ there is a pseudo-polynomial time algorithm  
 for $f$-decoded EMB. The next result excludes this option. 

\begin{theorem}
	\label{thm:embA}
		Assuming $\textnormal{P} \neq \textnormal{NP}$, there is a matroid decoder $f$ such that there is no algorithm  for $f$-\textnormal{decoded EMB} that for any  $f$-\textnormal{decoded EMB} instance $U =(I,c,T)$, where $n = \left|E_{f(I)}\right|+1$ and $m = c\left(E_{f(I)}\right) +1$, runs in time $ (n \cdot(T+2) \cdot m)^{O(1)}$. 
\end{theorem}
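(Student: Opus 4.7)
The plan is to mimic the $\Pi$-matroid construction of \Cref{sec:Pi} with $\Pi$ replaced by the set of satisfying assignments of a SAT formula. Define the matroid decoder $\fs$ that interprets its input $I \in \{0,1\}^*$ as a tuple $(n,k,\alpha,\varphi)$, where $n,k,\alpha \in \mathbb{N}_{>0}$ and $\varphi$ is a Boolean formula on variables $x_1,\ldots,x_n$, and returns the $\Pi$-matroid $M_{n,k,\alpha}(\Pi_\varphi)$ where
\[
\Pi_\varphi = \{S \subseteq [n] \mid \text{the assignment } x_i = \mathbf{1}[i \in S] \text{ satisfies } \varphi\}.
\]
By \Cref{lem:1}, $\fs(I)$ is a matroid. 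Moreover $[n]$ is recoverable from $I$ in polynomial time and membership of $S\subseteq [n]$ in $\cI_{n,k,\alpha}(\Pi_\varphi)$ reduces to checking $|S|$ and $\textnormal{\textsf{sum}}(S)$ and, when $|S|=k$ and $\textnormal{\textsf{sum}}(S)=\alpha$, evaluating $\varphi$ on the assignment induced by $S$, all in polynomial time; hence $\fs$ satisfies \Cref{def:encode}.

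The next step is to reduce $3$-\textsc{SAT} to $\fs$-decoded \textsc{EMB} with output parameters $n,T,m$ polynomial in the $3$-\textsc{SAT} instance size. Given a $3$-\textsc{SAT} formula $\varphi$ on $n$ variables, I construct an auxiliary formula $\psi$ on $4n$ variables organized in blocks of four: for each $i\in[n]$, identify variables $a_i,b_i,c_i,d_i$ with the four consecutive positions $4i-3,4i-2,4i-1,4i$ of the ground set $[4n]$, add clauses enforcing $a_i \leftrightarrow d_i$, $b_i \leftrightarrow c_i$ and $a_i \oplus b_i$, and rewrite each occurrence of $x_i$ in a clause of $\varphi$ by $a_i$ (and each $\lnot x_i$ by $b_i$). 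In any satisfying assignment of $\psi$, the $i$-th block activates exactly one of $\{a_i,d_i\}$ or $\{b_i,c_i\}$, contributing in either case the position-sum $(4i-3)+4i = 8i-3 = (4i-2)+(4i-1)$. Thus every satisfying assignment of $\psi$, viewed as a subset of $[4n]$, has cardinality exactly $2n$ and sum exactly $\sum_{i=1}^n(8i-3) = 4n^2+n$; moreover the satisfying assignments of $\psi$ are in bijection with those of $\varphi$.

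The reduction outputs the $\fs$-decoded \textsc{EMB} instance $(I',c,T)$ where $I'$ encodes $(4n,2n,4n^2+n,\psi)$, $c = \textnormal{\textsf{id}}_{4n}$, and $T = 4n^2+n$. By \Cref{def:Matroid} the bases of $\fs(I')$ are exactly the size-$2n$ subsets of $[4n]$, and those of $c$-weight $T$ are precisely the elements of $\mathcal{L}_{4n,2n,4n^2+n}(\Pi_\psi)$, i.e., the satisfying assignments of $\psi$. Hence the \textsc{EMB} instance is a ``yes''-instance iff $\varphi$ is satisfiable, while $|E_{\fs(I')}|+1 = 4n+1$, $T+2$, and $c(E_{\fs(I')})+1$ are all $O(n^2)$. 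A hypothetical algorithm running in time $(n\cdot (T+2) \cdot m)^{O(1)}$ for $\fs$-decoded \textsc{EMB} would thus decide $3$-\textsc{SAT} in $\mathrm{poly}(n)$ time, contradicting $\textnormal{P}\neq\textnormal{NP}$.

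The main obstacle is engineering the SAT reduction so that satisfying assignments share both a fixed cardinality \emph{and} a fixed position-sum, allowing them to be captured as bases of $\fs(I')$ of target weight $T$. The $4$-variable balanced block $\{a_i,b_i,c_i,d_i\}$ at positions $(4i-3,4i-2,4i-1,4i)$ is the key gadget, exploiting the identity $(4i-3)+4i = (4i-2)+(4i-1)$ to equalize the sum contribution regardless of the truth value of $x_i$.
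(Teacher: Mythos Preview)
Your proposal is correct, and the decoder $\fs$ you define coincides with the paper's $f_{\textnormal{SAT}}$. The difference lies in how SAT is reduced to $\fs$-decoded EMB. The paper does not engineer any gadget: it simply observes that any satisfying assignment $S$ of a SAT instance $A$ has \emph{some} cardinality $k\in[n]$ and sum $\alpha\in[n^2]$, and then iterates over all $O(n^3)$ pairs $(k,\alpha)$, constructing one ``structured'' EMB instance $(I_{k,\alpha},\textnormal{\textsf{id}}_n,\alpha)$ per pair and invoking the hypothetical pseudo-polynomial algorithm on each. This yields a polynomial-time Turing reduction with essentially no combinatorial work. Your approach instead builds the balanced four-variable blocks so that every satisfying assignment of the auxiliary formula $\psi$ has the \emph{same} cardinality $2n$ and sum $4n^2+n$, producing a single EMB instance per SAT instance and hence a many-one reduction. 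Your route is more constructive and gives a slightly sharper reduction type; the paper's route is shorter and avoids the gadget verification entirely.

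One small slip worth noting: it is not true that ``the bases of $\fs(I')$ are exactly the size-$2n$ subsets of $[4n]$''---size-$2n$ subsets with sum equal to $\alpha$ that are \emph{not} in $\Pi_\psi$ are dependent. What you actually need (and state correctly immediately after) is that the bases of $c$-weight $T=\alpha$ are precisely $\mathcal{L}_{4n,2n,\alpha}(\Pi_\psi)$, and this is what \Cref{obs:YesNo} gives; the argument goes through unchanged.
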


The proof of \Cref{thm:embA} is given towards the end of this section. Analogously to our hardness result for oracle MOL problems, we use \Cref{thm:embA} to give a hardness result for an encoded version of MOL problems. For every matroid decoder $f$ and every $P \in \cP$, we define a variant of the $P$-{\sc matroid optimization with a linear constraint ($P$-MOL)} problem 
	in which the matroid is given via an arbitrary bit-string which a matroid decoder $f$ decodes into a matroid. 
	Formally, let $P \in \cP$, where $P = (\opt,\cF,\triangleleft)$, be the parameters of the $P$-MOL problem. For a matroid decoder $f$, we define the $f$-decoded $P$-MOL problem as follows.  
	
		\problem{ {\sc $f$-decoded $P$-matroid optimization with a linear constraint} (f-decoded $P$-MOL)}{
		{\bf Decoder} & $f:\{0,1\}^* \rightarrow 2^\mathbb{N} \times 2^{2^{\mathbb{N}}}$ is a matroid decoder.\\
		{\bf Instance} &   $(I,v,w,L)$, where $I \in \{0,1\}^*$,  $v:E_{f(I)} \rightarrow \mathbb{R}_{\geq 0}$, $w:E_{f(I)} \rightarrow\mathbb{R}_{\geq 0}$, $L \in \mathbb{R}_{\geq 0}$. \\
		{\bf Objective} & $\opt~ v(S) \text{ s.t. } S \in \cF\left(f(I)\right),~  w(S) \triangleleft L$.
	}
	
	For example, consider the encoded version of the 
	$P$-MOL for $P = (\max,\IS,\leq)$ with the matroid decoder $f_{\textsf{u}}$ that decodes uniform matroids. 
	An instance of the $f_{\textsf{u}}$-decoded $P$-MOL problem is a tuple $U = (I,v,w,L)$ where $I = (E,k)$ is a bit-string used for extracting the uniform matroid $f_{\textsf{u}}(I) = (E,\cI)$ such that $\cI = \{S \subseteq E~|~|S| \leq k\}$, $v$ is the value function, $w$ is the weight function, and $L$ is the bound. 
	A {\em solution} of $U$ is $S \subseteq E$ such that $|S| \leq k$ and $w(S) \leq L$; the goal is to find a solution $S$ of maximum value $v(S)$. This problem is widely known as {\sc knapsack with cardinality constraint}.  
	
	Recall that $\cQ = \cP \setminus \{(\min,\IS,\leq)\}$ is the set of parameters for non-trivial MOL problems. Using the hardness of $f$-decoded, for some matroid decoder $f$ (details on $f$ are given towards the end of the section), we show the hardness of the $f$-decoded  variant of all non-trivial MOL problems.

	 \begin{theorem}
	 	\label{thm:mainA}
	 	Assuming $\textnormal{P} \neq \textnormal{NP}$, for any $P \in \cQ$ there is a matroid decoder $f$ such that there is no \textnormal{Fully PTAS} for $f$-\textnormal{decoded} $P$-\textnormal{MOL}. 
	 \end{theorem}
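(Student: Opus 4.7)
The plan is to mirror the proof of \Cref{thm:main} almost verbatim, replacing the oracle hardness of \textsc{EMB} with the encoded hardness from \Cref{thm:embA}. The crucial observation, already flagged by the authors below \Cref{def:reduction}, is that the reduction $R_P$ is purely mathematical: it does not touch the matroid $(E,\cI)$ itself, but only constructs a value function $v_I$, weight function $w_{I,P}$, and bound $L_{I,P}$ from the weight function $c$ and target $T$ of the \textsc{EMB} instance. Therefore, if the matroid is presented via a bit-string $I\in\{0,1\}^\ast$ decoded by some matroid decoder $f$, applying $R_P$ yields an $f$-decoded $P$-\textsc{MOL} instance that uses \emph{the same} bit-string $I$ to encode the matroid.

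Concretely, fix $P\in\cQ$ and let $f$ be the matroid decoder furnished by \Cref{thm:embA}, so that $f$-decoded \textsc{EMB} admits no algorithm running in time $(n\cdot(T+2)\cdot m)^{O(1)}$. I will show that the same decoder $f$ works for $f$-decoded $P$-\textsc{MOL}. Assume for contradiction that there is a Fully PTAS $\cA$ for $f$-decoded $P$-\textsc{MOL}. Given an $f$-decoded \textsc{EMB} instance $U=(I,c,T)$, form the $f$-decoded $P$-\textsc{MOL} instance $R_P(U)=(I,v_I,w_{I,P},L_{I,P})$ in time $(n\cdot(T+2)\cdot m)^{O(1)}$ by computing $H_I$, $d(P)$, $v_I$, $w_{I,P}$, and $L_{I,P}$ as in \Cref{def:reduction} (this only requires access to $E_{f(I)}$, which can be obtained from $I$ in polynomial time by \Cref{cond:1} of \Cref{def:encode}, together with $c$ and $T$). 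Then execute $\cA$ on $R_P(U)$ with error parameter $\eps_U$ from \eqref{eq:epsI}; if $\cA$ reports no solution, output ``no'', and otherwise output ``yes'' iff the returned solution $S$ satisfies $v_I(S)=H_I\cdot k_U+T$.

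Correctness follows identically to the proof of \Cref{thm:main}, using \Cref{thm:H}: if the algorithm returns ``yes'', then by Property~\ref{h:1}, $U$ is a ``yes''-instance of \textsc{EMB}; conversely, if $U$ is a ``yes''-instance, \Cref{thm:H}(Property~\ref{h:2}) guarantees that $R_P(U)$ has a solution and that any $(1+\eps_U)$-approximate solution witnesses $v_I(S)=H_I\cdot k_U+T$, so $\cA$ returns ``yes'' with probability at least one-half (deterministically, in fact, since we are in the deterministic setting of \Cref{thm:mainA}). The running time of the whole procedure is polynomial in $|R_P(U)|/\eps_U$, which is $(n\cdot(T+2)\cdot m)^{O(1)}$ by the choice of $\eps_U$ and the fact that the encoding size of $R_P(U)$ is polynomial in that of $U$. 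This contradicts \Cref{thm:embA}.

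The only step that needs any care is verifying that the reduction $R_P$ preserves the efficient encoding of the matroid: because $R_P$ leaves $I$ untouched and merely adds new numeric data, the same decoder $f$ governs both the \textsc{EMB} and the $P$-\textsc{MOL} instance, so \Cref{cond:1} and \Cref{cond:2} of \Cref{def:encode} are automatically inherited. I do not expect a serious obstacle here, as the groundwork has been laid by the remark following \Cref{def:reduction} that the reduction is independent of the input representation; the argument reduces to plugging \Cref{thm:embA} into the skeleton of \Cref{thm:main}'s proof.
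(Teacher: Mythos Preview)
Your proposal is correct and follows essentially the same route as the paper. The paper proves \Cref{thm:mainA} via \Cref{thm:mainD}, which instantiates the decoder as the specific $f_{\textnormal{SAT}}$ and contradicts \Cref{thm:embD}; you instead take the abstract decoder $f$ promised by \Cref{thm:embA} and contradict \Cref{thm:embA} directly, but the reduction and the appeal to \Cref{thm:H} are identical. One small point of notation: in \Cref{def:reduction} the map $R_P$ is defined on an EMB instance $(E,\cI,c,T)$, not on an $f$-decoded instance $(I,c,T)$; the paper handles this by introducing the intermediate ``mathematical'' instance $J=(E_{f(I)},\cI_{f(I)},c,T)$ and then reattaching the same bit-string $I$ to form the encoded $P$-MOL instance $X=(I,v_J,w_{J,P},L_{J,P})$, which is exactly what you describe informally when you say ``$R_P$ leaves $I$ untouched.''
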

	 
	 In the remainder of this section, we prove \Cref{thm:embA} and  \Cref{thm:mainA}. The matroid decoder used in our proofs decodes a subclass of the $\Pi$-matroid family (see \Cref{sec:Pi}), in which the secret family $\Pi$ consists of the solutions for a  {\sc boolean satisfiability problem (SAT)} instance.

	In a SAT instance $A = (V,\bar{V},\cC)$ with $n \in \mathbb{N}$ variables (in a slightly simplified notation), we are given a set $V = \{v_1, \ldots,v_n\}$ of variables, their negations $\bar{V} = \{\bar{v}_1,\ldots, \bar{v}_n\}$, and a set $\cC \subseteq 2^{V \cup \bar{V}}$ of clauses. 
	The goal is to decide if there is a set $S \subseteq [n]$ 
	satisfying that for all $C \in \cC$ there is $i \in [n]$ such that one of the following holds.   \begin{itemize}
		\item $v_i \in C$ and $i \in S$.
		\item $\bar{v}_i \in C$ and $i \notin S$. 
	\end{itemize} Such a set $S$ is called a {\em solution} of $A$; let $\mathcal{S}(A)$ be the set of solutions of a SAT instance $A$. In addition, let $n(A) = n$ be the number of variables in the instance $A$. 
	The family of SAT-matroids is the subfamily of $\Pi$-matroids where $\Pi = \mathcal{S}(A)$ for some SAT instance $A$ (for the notation and definition of $\Pi$-matroids, see \Cref{def:Matroid}). Specifically, 
	
	\begin{definition}
		\label{def:SATmatroid}
		Let $A$ be a \textnormal{SAT} instance, $k \in \left[n(A)\right]$, and $\alpha \in \left[  {n(A)}^{2}\right]$. Define the {\bf SAT-matroid} on $A,k,\alpha$ as $M_{n(A),k,\alpha}(\mathcal{S}(A)) = \left(\left[n(A)\right],\cI_{n(A),k,\alpha}(\mathcal{S}(A))\right)$. 
	\end{definition}
	We show below that SAT-matroids can be  encoded efficiently. 
	For every $I \in \{0,1\}^*$, we interpret $I$ as $I = (A,k,\alpha)$, where $A$ is a \textnormal{SAT} instance, $k \in \left[n(A) \right]$, and $\alpha \in \left[ {n(A)}^2 \right]$; w.l.o.g., we may assume that every $I = (A,k,\alpha) \in \{0,1\}^*$ can be interpreted in that manner; moreover, we can assume that $n(A) \leq |A|$, where $|A|$ is the encoding size of $A$. The following definition defines a matroid decoder that decodes SAT-matroids from a bit-string. 
	\begin{definition}
		\label{def:SATencoder}
		Define the {\em SAT-decoder}  as the function $f_{\textnormal{SAT}}:\{0,1\}^* \rightarrow 2^\mathbb{N} \times 2^{2^{\mathbb{N}}}$ such that for all $I = (A,k,\alpha) \in \{0,1\}^*$ it holds that $f_{\textnormal{SAT}}(I) = M_{n(A),k,\alpha}(\mathcal{S}(A))$. 
	\end{definition}
	In the next result, we show that the SAT-decoder $\fs$ is indeed a  matroid decoder.
	\begin{lemma}
		\label{lem:encodable}
		$\fs$ is a matroid decoder such that $|I| = |A|^{O(1)}$ for every $I=(A,k,\alpha) \in \{0,1\}^*$. 
	\end{lemma}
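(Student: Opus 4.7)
The plan is to verify the three conditions of Definition~\ref{def:encode} for $\fs$ and separately argue the bit-length bound on $I$. Fix an arbitrary $I = (A,k,\alpha) \in \{0,1\}^*$, so that $\fs(I) = M_{n(A),k,\alpha}(\mathcal{S}(A))$. That $\fs(I)$ is indeed a matroid is immediate from \Cref{lem:1}, since $\mathcal{S}(A) \subseteq 2^{[n(A)]}$ qualifies as a valid choice of secret family $\Pi$. Thus the main content of the lemma is in implementing the two required algorithms within polynomial time in $|I|$.

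First I would handle Condition~\ref{cond:1} of \Cref{def:encode}. The ground set of $\fs(I)$ is $E_{\fs(I)} = [n(A)]$, and $n(A)$ is determined by a single linear scan of the encoding of $A$. Hence $E_{\fs(I)}$ can be produced in $|A|^{O(1)} \leq |I|^{O(1)}$ time. Next, for Condition~\ref{cond:2}, given $S \subseteq [n(A)]$ the membership test proceeds by following the case split in \eqref{eq:1}: compute $|S|$ and $\textnormal{\textsf{sum}}(S)$, then (i) if $|S|<k$ return yes ($S \in \cJ_{n(A),k}$); (ii) if $|S|=k$ and $\textnormal{\textsf{sum}}(S)\neq \alpha$ return yes ($S \in \mathcal{K}_{n(A),k,\alpha}$); (iii) if $|S|=k$ and $\textnormal{\textsf{sum}}(S)=\alpha$, return yes iff $S \in \mathcal{S}(A)$; (iv) otherwise return no. All of the arithmetic tests run in $|I|^{O(1)}$ time; the only substantive subroutine is the test $S \in \mathcal{S}(A)$, which is \emph{verification} (not solution) of a SAT instance against a fixed assignment: iterate over every clause $C \in \mathcal{C}$, and check whether some $i \in [n(A)]$ witnesses $C$ according to the definition of a solution to $A$. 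This verification runs in time polynomial in $|A|$, and hence in $|I|$.

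For the bit-length bound, recall that by the standing assumption on the interpretation of $I$ we have $k \in [n(A)]$ and $\alpha \in [n(A)^2]$, together with $n(A)\leq |A|$. Therefore encoding $(k,\alpha)$ requires only $O(\log |A|)$ bits in addition to the encoding of $A$, so $|I| = |A| + O(\log |A|) = |A|^{O(1)}$.

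The only subtle point is Condition~\ref{cond:2} in the case $S \in \mathcal{L}_{n(A),k,\alpha}(\mathcal{S}(A))$: deciding whether $S \in \mathcal{S}(A)$ is not the (NP-hard) problem of deciding satisfiability of $A$, but rather the polynomial-time problem of verifying a given candidate assignment $S$ against all clauses of $A$. This distinction is what makes \textnormal{SAT}-matroids efficiently encodable while still allowing the secret family $\Pi = \mathcal{S}(A)$ to embed \textnormal{SAT}-hardness, which is the feature exploited in the subsequent proof of \Cref{thm:embA}.
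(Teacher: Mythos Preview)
Your proof is correct and follows essentially the same approach as the paper: invoke \Cref{lem:1} to confirm $\fs(I)$ is a matroid, handle Condition~\ref{cond:1} by reading off $n(A)$, handle Condition~\ref{cond:2} by the case split on $|S|$ and $\textnormal{\textsf{sum}}(S)$ with the key observation that testing $S\in\mathcal{S}(A)$ is polynomial-time clause verification, and bound $|I|$ via $k,\alpha\leq n(A)^2\leq|A|^2$. Your added remark distinguishing SAT verification from SAT solving is a nice clarification that the paper leaves implicit.
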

	Recall the sets  $\mathcal{J}_{n,k}$,  $\mathcal{K}_{n,k,\alpha}$, $\mathcal{L}_{n,k,\alpha}(\Pi)$ and $\cI_{n,k,\alpha}(\Pi)$ were defined in  \Cref{def:Matroid}.
	\begin{proof} 
		We first show that $\fs$ is a matroid decoder. 
		%
		%
		For all $I = (A,k,\alpha) \in \{0,1\}^*$  it holds that $\mathcal{S}(A) \subseteq 2^{\left[n(A)\right]}$; thus, by \Cref{lem:1} it holds that $f_{\textnormal{SAT}}(I)$ returns a matroid. Moreover, for all $I = (A,k,\alpha) \in \{0,1\}^*$ it holds that $k,\alpha = {n(A)}^{O(1)}$ and $n(A) \leq |A|$ by our encoding; hence, $|I| = |A|^{O(1)}$.  It remains to prove that $\fs$ satisfies \Cref{cond:1,cond:2} in \Cref{def:encode}.
		\begin{enumerate}
			\item  	\Cref{cond:1}. Given $I = (A,k,\alpha) \in \{0,1\}^*$, we can compute $E_{f_{\textnormal{SAT}}(I)} = \left[n(A)\right]$ in time $|I|^{O(1)}$ by identifying the number of variables (i.e., $n(A)$) in $A$. 
			\item 		\Cref{cond:2}. Let $I = (A,k,\alpha) \in \{0,1\}^*$ and let $n = n(A)$; we show that we can decide membership in the matroid $\fs(I)$ in time $\io$.  
			For every $S \subseteq \left[n\right]$ we can compute $|S|$ and $\textsf{sum}(S)$ in time $|I|^{O(1)}$. Thus, validating if $S \in \mathcal{J}_{n,k}$ or $S \in \mathcal{K}_{n,k,\alpha}$ can be computed in time $\io$ regardless of $\mathcal{S}(A)$. In addition, 
			deciding if $S \in \mathcal{S}(A)$ can be computed in time $|A|^{O(1)} = |I|^{O(1)}$ by confirming that all clauses are satisfied. Thus, by \eqref{eq:1} we can decide if $S \in \mathcal{L}_{n,k,\alpha}(\mathcal{S}(A))$. Overall, deciding whether $S \in \mathcal{J}_{n,k}$, $S \in \mathcal{K}_{n,k,\alpha}$, or $S \in \mathcal{L}_{n,k,\alpha}(\mathcal{S}(A))$ can be computed in time $|I|^{O(1)}$. By the above and \eqref{eq:1}, we can compute in time $\io$ correctly if a set $S$ belongs to $\cI_{\fs (I)} = \cI_{n,k,\alpha}(\mathcal{S}(A))$. 
		\end{enumerate}
		Thus, $\fs$ is a matroid decoder by \Cref{def:encode}, which gives the statement of the lemma.  
	\end{proof}

	We will use an algorithm for $f$-decoded EMB to obtain an algorithm for SAT. To this end, consider the following family of {\em structured } $\fs$-decoded EMB instances.

											\begin{definition}
			\label{def:problemE}
			An $\fs$-{\sc decoded EMB} instance $(I,c,T)$, where $I = (A,k,\alpha)$ is called {\em structured} if  $\alpha \in \left[ {n(A)}^2 \right]$, $T = \alpha$, and $c: \left[ n(A) \right] \rightarrow \mathbb{N}$ such that for all $i \in \left[ n(A) \right]$ it holds that $c(i) = i$ 
		\end{definition}

		The next observation immediately follows from the definition of structured $\fs$-decoded EMB instances and SAT-matroids. 
		
		\begin{observation}
			\label{obs:YesNoD}
			for any structured $\fs$-{\sc decoded EMB} instance $U = (I,c,T)$, where $I = (A,k,\alpha)$, and $S \subseteq \left[ n(A)\right]$, it holds that: $S$ is a solution for $U$ if and only if $S \in \mathcal{S}(A)$, $|S| = k$, and $\textnormal{\textsf{sum}}(S) = c(S) = \alpha$. 
		\end{observation}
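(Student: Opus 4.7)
The plan is to unpack the definitions. Given a structured $\fs$-decoded EMB instance $U = (I,c,T)$ with $I = (A,k,\alpha)$, we have $T = \alpha$, $c(i) = i$ for every $i \in [n(A)]$ (hence $c(S) = \textnormal{\textsf{sum}}(S)$ for every $S \subseteq [n(A)]$), and $\fs(I) = M_{n(A),k,\alpha}(\mathcal{S}(A))$. By definition, a solution of $U$ is a basis $S$ of $\fs(I)$ with $c(S) = T$. The key structural fact I invoke is the one noted in the paragraph following \Cref{lem:1}: since $\Pi$-matroids are paving matroids, the bases of $M_{n,k,\alpha}(\Pi)$ are precisely the members of $\mathcal{K}_{n,k,\alpha} \cup \mathcal{L}_{n,k,\alpha}(\Pi)$, all of which have cardinality~$k$.

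For the forward direction, suppose $S$ is a solution of $U$. Then $S$ is a basis of $\fs(I)$, so $|S| = k$, and $\textnormal{\textsf{sum}}(S) = c(S) = T = \alpha$. Since $\mathcal{K}_{n(A),k,\alpha}$ explicitly excludes subsets with sum $\alpha$, we must have $S \in \mathcal{L}_{n(A),k,\alpha}(\mathcal{S}(A))$, which in particular forces $S \in \mathcal{S}(A)$. For the converse, if $S \in \mathcal{S}(A)$ with $|S| = k$ and $\textnormal{\textsf{sum}}(S) = \alpha$, then by \eqref{eq:1} we obtain $S \in \mathcal{L}_{n(A),k,\alpha}(\mathcal{S}(A)) \subseteq \cI_{n(A),k,\alpha}(\mathcal{S}(A))$, so $S$ is independent; since $|S| = k$ attains the rank, $S$ is a basis of $\fs(I)$, and the identity $c(S) = \textnormal{\textsf{sum}}(S) = \alpha = T$ shows that $S$ is a solution.

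There is no real obstacle here; the observation is a direct mechanical consequence of the definitions, chaining the equality $c = \textnormal{\textsf{sum}}$ (from structuredness) with the explicit description of the bases of the SAT-matroid. The only mild subtlety worth flagging is the degenerate boundary case where $\mathcal{K}_{n(A),k,\alpha} \cup \mathcal{L}_{n(A),k,\alpha}(\mathcal{S}(A))$ is empty and the matroid has rank strictly less than $k$; in that case the only basis of $\fs(I)$ is $\emptyset$, and since $T = \alpha \geq 1$ by the definition of structuredness, $\emptyset$ is not a solution, so both sides of the biconditional are vacuously false for every $S$.
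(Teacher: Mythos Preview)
Your proof is correct and takes the same route as the paper, which offers no explicit argument and simply notes that the observation ``immediately follows from the definition of structured $\fs$-decoded EMB instances and SAT-matroids.'' One small slip in your treatment of the degenerate case: when $\mathcal{K}_{n(A),k,\alpha} \cup \mathcal{L}_{n(A),k,\alpha}(\mathcal{S}(A)) = \emptyset$ (which forces $k = n(A)$ and $\alpha = \textnormal{\textsf{sum}}([n(A)])$), the bases of $\fs(I)$ are all $(k-1)$-element subsets of $[n(A)]$, not just $\emptyset$; your conclusion nonetheless survives, since every such basis $S$ satisfies $c(S) = \textnormal{\textsf{sum}}(S) \leq \textnormal{\textsf{sum}}([n(A)]) - 1 < \alpha = T$, so no basis is a solution and both sides of the biconditional remain false.
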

		
		We show that given a polynomial algorithm that decides structured $\fs$-decoded EMB instances, we can decide SAT. This result easily imply \Cref{thm:embA} as shown afterwards.  
		

		\begin{lemma}
				\label{lem:SAT}
				Assuming $\textnormal{P} \neq \textnormal{NP}$, there no algorithm that decides every $\fs$-decoded \textnormal{EMB} structured instance $(I,c,T)$ in time $\io$. 
			\end{lemma}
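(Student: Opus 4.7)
The plan is to prove the lemma by contradiction. Assume there is an algorithm $\mathcal{A}$ that decides every structured $\fs$-decoded EMB instance in time $\io$; I would use $\mathcal{A}$ as a subroutine to build a polynomial-time decider for \textnormal{SAT}, contradicting $\textnormal{P}\neq \textnormal{NP}$. Given a \textnormal{SAT} instance $A$ with $n = n(A)$ variables, the reduction has two parts.

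First, I would dispose of a boundary case. In time $|A|^{O(1)}$, check directly whether $\emptyset \in \mathcal{S}(A)$ (equivalently, whether every clause of $A$ contains at least one negated literal); if so, report ``yes''. Otherwise every solution of $A$ is non-empty, so any $S \in \mathcal{S}(A)$ satisfies $|S| \in [n]$ and $\textnormal{\textsf{sum}}(S) \in [n^2]$. This is necessary because \Cref{def:SATmatroid} (and hence the structured instances) requires $k \in [n]$ and $\alpha \in [n^2]$, which excludes $\emptyset$.

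The main step is the enumeration. For every pair $(k,\alpha) \in [n] \times [n^2]$, construct the structured $\fs$-decoded EMB instance
\[
U_{k,\alpha} = \bigl( (A,k,\alpha),\, c,\, \alpha \bigr), \qquad c(i) = i \text{ for all } i \in [n],
\]
and invoke $\mathcal{A}(U_{k,\alpha})$. Report ``yes'' on $A$ iff some $U_{k,\alpha}$ is a yes-instance. Correctness follows from \Cref{obs:YesNoD}: $U_{k,\alpha}$ is a yes-instance iff there exists $S \in \mathcal{S}(A)$ with $|S| = k$ and $\textnormal{\textsf{sum}}(S) = \alpha$, so the enumeration detects exactly the existence of a non-empty satisfying assignment, and combined with the boundary check it correctly decides \textnormal{SAT}.

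For the running time, by \Cref{lem:encodable} the encoding of each $U_{k,\alpha}$ satisfies $|(A,k,\alpha)| = |A|^{O(1)}$, so each call to $\mathcal{A}$ takes $|A|^{O(1)}$ time; there are $n \cdot n^2 = O(n^3) = |A|^{O(1)}$ calls, giving an overall polynomial-time SAT-decider and the desired contradiction. The main obstacle I anticipate is entirely bookkeeping: verifying that the pairs $(k,\alpha)$ produced by the reduction lie in the admissible ranges required by \Cref{def:SATmatroid} so that $\fs$ indeed decodes the intended SAT-matroid, and ensuring the empty-assignment edge case (which falls outside $k\in[n]$) is handled before invoking $\mathcal{A}$.
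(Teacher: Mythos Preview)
Your proposal is correct and follows essentially the same approach as the paper: both handle the empty-assignment case separately, enumerate over all $(k,\alpha)\in[n]\times[n^2]$, invoke the assumed algorithm on each structured instance $U_{k,\alpha}$, and appeal to \Cref{obs:YesNoD} for correctness and to $n\le |A|$ (together with \Cref{lem:encodable}) for the polynomial running-time bound. The only cosmetic difference is that you dispose of the $\emptyset$ case before the enumeration rather than after, which is immaterial.
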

	\begin{proof}
	
	Assume that $\textnormal{P} \neq \textnormal{NP}$ and assume towards a contradiction that there is an algorithm $\cA$ that decides every $\fs$-decoded EMB structured instance $(I,c,T)$ in time $\io$. We describe the following algorithm $\cB$ which decides SAT using $\cA$. Let $A$ be a SAT instance and let $n = n(A)$. 
	\begin{enumerate}
		\item 	For every $k \in \left[n\right]$ and $\alpha \in \left[n^2\right]$, let $I_{k,\alpha} = (A,k,\alpha)$; execute $\cA$ on the structured instance $U_{k,\alpha} = (I_{k,\alpha},c,\alpha)$. 
		\item  If $\emptyset$ is a solution for $A$,  return ``yes" on $A$. 
		\item If there are $k \in \left[n\right]$ and $\alpha \in \left[n^2\right]$ such that $\cA$ returns ``yes" on $U_{k,\alpha}$, 
	 return ``yes" on $A$. 
	 \item Otherwise, return ``no" on $A$. 
	\end{enumerate}
	To show that $\cB$ decides SAT, consider the following cases. 
	\begin{itemize}
		\item $A$ is a ``yes"-instance. Then, there is a solution $S \in \mathcal{S}(A)$. If $\emptyset \in \mathcal{S}(A)$ then $\cB$ returns ``yes" on $A$. Otherwise, there are $k \in [n]$, $\alpha \in \left[n^2\right]$, and a solution $S \in \mathcal{S}(A)$ such that $|S| = k$ and $ \textnormal{\textsf{sum}}(S) = \alpha$. Since $U_{k,\alpha}$ is structured, we have $c(S) = \alpha$; therefore, $S$ is a solution for $U_{k,\alpha}$ by \Cref{obs:YesNoD}. Consequently, $\cA$ returns ``yes" on $U_{k,\alpha}$; hence, $\cB$ returns ``yes" on $A$.

		\item $A$ is a ``no"-instance. Then, $\mathcal{S}(A) = \emptyset$. 
		Assume towards a contradiction that $\cB$ returns ``yes" on $A$. Thus, since $\emptyset \notin \mathcal{S}(A)$, there are $k \in [n]$ and $\alpha \in \left[ n^2\right]$ such that $\cA$ returns ``yes" on $U_{k,\alpha}$, implying that there is a solution $S$ for $U_{k,\alpha}$. Since $U_{k,\alpha}$ is structured, by \Cref{obs:YesNoD} it holds that $S \in \mathcal{S}(A)$. 
		This is a contradiction that $\mathcal{S}(A) = \emptyset$. Thus, $\cB$ returns ``no" on $A$. 
	\end{itemize} 
	Hence, $\cB$ decides the SAT instance $A$ correctly. We analyze the running time below.  
	
	 By our encoding, it holds that $n \leq |A|$ as $A$ contains $n$ distinct variables. In addition, observe that the identity function $c:[n] \rightarrow [n]$ can be encoded using at most $n \cdot \log (n)$ bits (this is the weight function of every structured instance $U_{k,\alpha}$ constructed in the algorithm, for $k \in [n]$ and $\alpha \in \left[n^2\right]$). Then, for all $k \in [n]$ and $\alpha \in \left[ n^2\right]$ observe that the encoding of the instance $U_{k,\alpha}$ satisfies: $$|U_{k,\alpha}| = |A|+k+\alpha+n \cdot \log (n) \leq |A|+n+n^2 +n \cdot \log (n)= O\left(|A|^2\right).$$ 
	 Moreover, the number of iterations (i.e., the number of distinct options for the parameters $k,\alpha$) is bounded by $n \cdot n^2 = n^3 = O(|A|^3)$. Thus, the running time of each iteration is bounded by $|U_{k,\alpha}|^{O(1)} = |A|^{O(1)}$ by the running time guarantee of $\cA$. Therefore, our algorithm decides SAT in time $|A|^{O(1)}$. Since SAT is known to be NP-Hard (see, e.g., \cite{garey1997computers} for more details), we reach a contradiction to the existence of $\cA$. 
\end{proof}

From the above result, the hardness of the more general $\fs$-decoded EMB easily follows. As an immediate corollary, \Cref{thm:embD} gives the proof of \Cref{thm:embA}. 

	\begin{restatable}{theorem}{thmembD}
	\label{thm:embD}
	Assuming $\textnormal{P} \neq \textnormal{NP}$, there is no algorithm  for $\fs$-decoded \textnormal{EMB} that runs in time $ (n \cdot(T+2) \cdot m)^{O(1)}$, for any  $\fs$-decoded \textnormal{EMB} instance $U =(I,c,T)$ where $n = \left|E_{\fs(I)}\right|+1$ and $m = c\left(E_{\fs(I)}\right) +1$. 
\end{restatable}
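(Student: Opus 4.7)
The plan is to derive \Cref{thm:embD} directly from \Cref{lem:SAT} by observing that on structured instances, the parameters $n$, $T$, and $m$ are all polynomially bounded in the instance size. So a pseudo-polynomial algorithm for $\fs$-decoded EMB would collapse to a polynomial algorithm on structured instances, contradicting \Cref{lem:SAT}.

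More concretely, I would assume toward contradiction that an algorithm $\cA$ exists for $\fs$-decoded EMB running in time $(n \cdot (T+2) \cdot m)^{O(1)}$. Given an arbitrary structured $\fs$-decoded EMB instance $U=(I,c,T)$ with $I = (A,k,\alpha)$, I would bound each parameter in terms of $n(A)$ using \Cref{def:problemE}: $n = n(A)+1$, $T = \alpha \leq n(A)^2$, and since $c(i)=i$ on $[n(A)]$, $m = c(E_{\fs(I)}) + 1 = \textsf{sum}([n(A)]) + 1 \leq n(A)^2$. Thus $n \cdot (T+2) \cdot m = O(n(A)^5)$, and since $n(A)\leq |A|\leq |I|$ by our encoding convention, the running time of $\cA$ on $U$ is $|I|^{O(1)}$.

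This would give an algorithm that decides every structured $\fs$-decoded EMB instance in polynomial time, directly contradicting \Cref{lem:SAT} (which itself is conditioned on $\textnormal{P}\neq\textnormal{NP}$). The statement of \Cref{thm:embD} follows.

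I do not expect any genuine obstacle here: all the conceptual work is already contained in \Cref{lem:SAT} (the reduction from SAT via structured instances), and \Cref{thm:embD} is a routine rephrasing in terms of the pseudo-polynomial complexity parameters $n,T,m$. The only minor care needed is in bounding $m$ correctly from the identity weight function, and confirming $n(A)\leq |A|$ from the encoding assumption made prior to \Cref{def:SATencoder}; both are straightforward.
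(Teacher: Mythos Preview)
Your proposal is correct and follows essentially the same argument as the paper: both bound $n$, $T$, and $m$ on structured instances by polynomials in $n(A)\leq |I|$ (using \Cref{def:problemE} and the identity weight function), conclude that the hypothetical pseudo-polynomial algorithm runs in time $|I|^{O(1)}$ on structured instances, and invoke \Cref{lem:SAT} for the contradiction. The only cosmetic difference is that you spell out the bound $n\cdot(T+2)\cdot m = O(n(A)^5)$ explicitly, whereas the paper leaves the bound on $T$ implicit.
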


\begin{proof}
	Let $U = (I,c,T)$ be a structured $\fs$-decoded EMB instance and let $I = (A,k,\alpha)$. Recall that by our encoding it holds that $|I| \geq |A| \geq n(A)$. Thus, 
	since $U$ is structured, it follows that  $\left|E_{\fs(I)}\right| = n(A) \leq |I|$ and 
	$$c\left(E_{\fs(I)}\right) = \textsf{sum}\left(\left[n(A)\right]\right) \leq \frac{n(A) \cdot (n(A)+1)}{2} = O\left(|I|^2\right).$$
	Thus, by the above, an algorithm $\cA$ that decides every $\fs$-EMB  instance $(I',c',T')$ in time $(n' \cdot(T'+2) \cdot m')^{O(1)}$, where  $n' = |E_{\fs(I')}|+1$ and $m' = c\left(E_{\fs(I)}\right) +1$, in particular decides $U$ in time $\io$. As $U$ is an arbitrary structured instance, unless $\textnormal{P} = \textnormal{NP}$, by \Cref{lem:SAT} such an algorithm $\cA$ cannot exist.  
\end{proof}


	Finally, 
	we show that 
the variants of non-trivial matroid optimization with a linear constraint (MOL) problems, in which the decoding is performed by the SAT-decoder $\fs$, do not admit Fully PTAS under the standard assumption $\textnormal{P} \neq \textnormal{NP}$.   The proof is similar to the proof of \Cref{thm:main} in \Cref{sec:tight_absolute}. \Cref{thm:mainD} directly gives the proof of \Cref{thm:mainA}.

\begin{restatable}{lemma}{thmMainD}
	\label{thm:mainD}
	Assuming $\textnormal{P} \neq \textnormal{NP}$, for any $P \in \cQ$ there is no \textnormal{Fully PTAS} for $\fs$-\textnormal{decoded} $P$-\textnormal{MOL}. 
\end{restatable}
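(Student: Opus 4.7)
The plan is to mirror the proof of \Cref{thm:main} step-by-step, replacing the unconditional hardness of oracle-EMB by the conditional hardness of $\fs$-decoded EMB given in \Cref{thm:embD}. The crucial enabling fact---already anticipated by the remark preceding \Cref{def:reduction} that the reduction ``does not specify the encoding of the instance''---is that $R_P$ leaves the underlying matroid $(E,\cI)$ untouched and only modifies $v,w,L$. Consequently, whenever an EMB instance has its matroid encoded by a bit-string $I\in\{0,1\}^*$ through $\fs$, the \emph{same} bit-string $I$ encodes the matroid of the reduced $P$-MOL instance through $\fs$; no re-encoding or change of decoder is required.

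Concretely, I would fix $P\in\cQ$ and assume towards contradiction that there is a Fully PTAS $\cA$ for $\fs$-decoded $P$-MOL. From $\cA$ I would build a polynomial-time algorithm for $\fs$-decoded EMB as follows. Given an input $U=(I,c,T)$, first extract $E_{\fs(I)}$ using the decoding algorithm guaranteed by \Cref{def:encode}, then apply \Cref{def:reduction} to the (mathematical) EMB instance $(E_{\fs(I)},\cI_{\fs(I)},c,T)$ to obtain $v,w,L$ together with the auxiliary constants $k$ and $H$, and finally invoke $\cA$ on the $\fs$-decoded $P$-MOL instance $(I,v,w,L)$ with error parameter $\eps_U$ as in \eqref{eq:epsI}. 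The output is ``yes'' iff $\cA$ returns a solution $S$ with $v(S)=k\cdot H+T$, and ``no'' otherwise (including the case where $\cA$ reports infeasibility).

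Correctness follows verbatim from \Cref{thm:H}: Property~\ref{h:1} handles the ``accept $\Rightarrow$ yes'' direction, and Property~\ref{h:2} together with the approximation guarantee of $\cA$ handles the converse. For the running time, I would argue---using \Cref{def:encode} to bound the cost of simulating membership queries and of computing $E_{\fs(I)}$---that both the construction of $(I,v,w,L)$ and the subsequent call to $\cA$ run in time $(n\cdot(T+2)\cdot m)^{O(1)}$, where $n=|E_{\fs(I)}|+1$ and $m=c(E_{\fs(I)})+1$. This directly contradicts \Cref{thm:embD} under the assumption $\textnormal{P}\neq\textnormal{NP}$.

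I expect no genuine obstacle beyond bookkeeping: the delicate design choices have already been made in \Cref{def:reduction} (which is representation-oblivious) and in \Cref{lem:encodable} (which guarantees that $|E_{\fs(I)}|$ and the cost of deciding membership are polynomial in $|I|$). The only item that deserves a careful check is that the error parameter $\eps_U$ remains large enough---i.e., $1/\eps_U$ stays polynomially bounded in $n$, $T$ and $m$---so that the Fully PTAS regime of $\cA$ yields a polynomial overall running time, which is immediate by inspection of \eqref{eq:epsI}.
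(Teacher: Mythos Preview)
Your proposal is correct and essentially identical to the paper's own proof: both assume a Fully PTAS $\cA$ for $\fs$-decoded $P$-MOL, pass the unchanged bit-string $I$ through the reduction of \Cref{def:reduction} (exploiting that $R_P$ leaves the matroid untouched), invoke $\cA$ with error parameter $\eps$ from \eqref{eq:epsI}, and use \Cref{thm:H} for correctness to obtain a pseudo-polynomial decider for $\fs$-decoded EMB, contradicting \Cref{thm:embD}. The paper even makes explicit the same ``representation-obliviousness'' observation you highlight, noting that $U,J$ and $X,R_P(J)$ differ only in how the matroid is represented.
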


\begin{proof}
	Assume that $\textnormal{P} \neq \textnormal{NP}$ and assume towards a contradiction that there is a Fully PTAS $\cA$ for $\fs$-decoded $P$-MOL. We use $\cA$ to decide $\fs$-decoded EMB. Let $U = (I,c,T)$  be an $\fs$-decoded EMB instance. 
	Consider the following algorithm $\cB$ that decides $U$. In the algorithm, we use the reduction described in \Cref{def:reduction} and an error parameter defined in \eqref{eq:epsI}. 
	\begin{enumerate}
		\item Construct the $\fs$-decoded $P$-MOL instance $X = (I,v_J,w_{J,P},L_{J,P})$, where \newline $J = \left(E_{\fs(I)},\cI_{\fs (I)},c,T\right)$ is an EMB instance.\label{step:1D} 
		\item 
		Execute $\cA$ with the input $X$ and $\eps_J$ (see \eqref{eq:epsI}). \label{step:2D} 
		\item If $\cA$ returns ``no" on $X$: Return ``no" on $U$. 
		\item Otherwise, let $S \leftarrow \cA(X,\eps_J)$ be the returned solution by $\cA$.\label{step:4D} 
		\item Return ``yes" on $U$ 
		if and only if $v_J(S) = H_J \cdot k_J+T$. \label{step:5D} 
	\end{enumerate}
	
	Observe that $U,J$ share the same matroid, weight function, and target value; the only difference between them is the representation of the matroid, that is encoded in $U$ while in $J$ the representation of the matroid is not specified. Similarly, $X$ and $R_P(J)$ also share the same matroid, value function, weight function, and bound; the only difference between them is that $X$ gives a concrete encoding of the matroid.  
	
	Let $n = |E|+1$ and let $m = c(E)+1$. Note that $X$ can be trivially constructed from $U$ in time $\left(n \cdot (T+2) \cdot m\right)^{O(1)}$ by \Cref{def:reduction}. 
	Thus, the running time of $\cB$ on $U$ is $\left(n \cdot (T+2) \cdot m\right)^{O(1)}$, since $\cA$ is a Fully PTAS for $\fs$-decoded $P$-MOL and by the selection of the error parameter  $\eps_J$ in \eqref{eq:epsI}. 
	We now show correctness. 

	\begin{itemize}
		\item If $\cB$ returns ``yes" on $U$. 
		Then, Step 4 of the algorithm computes a solution $S$ for $X$ satisfying $v_J(S) = H_J \cdot k_J+T$. Clearly, $S$ is a solution for $R_P(J)$ as well. By \Cref{thm:H}, 
		$S$ is also a solution for $J$. As $J$ and $U$ share the same matroid, weight function, and target value, we conclude that $S$ is a solution for $U$. Therefore, $U$ is a ``yes"-instance.

		\item If $U$ is a ``yes"-instance. Then, $J$ has a solution. Therefore, by \Cref{thm:H} it holds that $R_P(J)$ has a solution, which implies that $X$ has a solution. Thus, since $\cA$ is a Fully PTAS for $\fs$-decoded $P$-MOL, $\cA$ returns a $(1+\eps_J)$-approximate solution $S$ for $X$; clearly, $S$ is also a $(1+\eps_J)$-approximate solution for $R_P(J)$. 
		Then, by \Cref{thm:H} it follows that $v_J(S) = H_J \cdot k_J+T$. Thus, $\cB$ returns ``yes" on $U$.
	\end{itemize}  Hence, $\cB$ decides $\fs$-decoded EMB in time $\left(n \cdot (T+2) \cdot m\right)^{O(1)}$, contradicting \Cref{thm:embD}.  
\end{proof}

\section{Discussion}
\label{sec:discussion}

In this paper, we derive lower bounds for the family of matroid optimization problems with a linear constraint. We show that none of the (non-trivial) members of this family 
admits a Fully PTAS. In particular, this rules out a Fully PTAS for previously studied problems such as {\sc budgeted matroid independent set}, 
{\sc constrained minimum basis of a matroid},
and {\sc knapsack cover with a matroid}. 
As BM and CMB admit an Efficient PTAS,
our lower bounds resolve the complexity status of these problems, which has been
open also for the generalization of {\sc budgeted matroid intersection} \cite{CVZ11,BBGS11,DKS23b}. Our preliminary study shows that using the techniques of \cite{HL04}, we may be able to derive Efficient PTAS for {\em all}  MOL problems. This would imply that \Cref{thm:main} gives a tight lower bound for the entire MOL family.
We leave the details for future work.  

 A key result of this paper is that {\sc exact matroid basis} (EMB) does not admit a pseudo-polynomial time algorithm, unlike the known special cases of $k$-{\sc subset sum} and EMB on a linear matroid.
  Our proofs can be used to obtain lower bounds for other problems. For example, 
  the hardness result for EMB can be adapted to yield  
  lower bounds for related {\em parameterized} problems~\cite{FGKSS23,doron2023budgeted}.
  Moreover, the proof of \Cref{thm:main} can be modified to show that an  Efficient PTAS for a non-trivial MOL problem with running time $f\left( \frac{1}{\eps}\right)\cdot \textnormal{poly}(n)$ must satisfy $f\left(\frac{1}{\eps} \right)=\Omega\left( 2^{\eps^{-\frac{1}{4}}}\right)$. We leave these generalizations of our results to a later version of this paper.   

 Our results build on the $\Pi$-matroid family introduced in this paper. Such matroids exploit the interaction between a weight function and the underlying matroid constraint of the given problem. Aside from the implications of our results for previously studied problems,
 the new subclass of $\Pi$-matroids may enable to derive lower bounds for 
 other problems. For example, consider the generalization of BM where the objective function is submodular and monotone.
 This is known as monotone submodular maximization with a knapsack and a matroid constraint \cite{chekuri2010dependent}. Indeed, if the knapsack constraint is removed, there is a tight $\left(1-\frac{1}{e} \right)$-approximation for the problem \cite{calinescu2011maximizing}. The same bound holds if we relax the matroid constraint~\cite{sviridenko2004note}. However, the best known approximation for the problem with a knapsack and a matroid constraint
 is $\left(1-\frac{1}{e} -\eps \right)$~\cite{chekuri2010dependent}. This setting resembles the status of MOL problems prior to our work, where removing either the linear or the matroid constraint induces a substantially easier problem. 
 The potential use of $\Pi$-matroid variants to rule out a
 $\left(1-\frac{1}{e} \right)$-approximation for 
 the above problem remains an interesting open question. One additional consequence of our lower bound in a different domain is that solving {\em configuration} LPs for packing problems with a matroid constraint (e.g., \cite{grage2019eptas,doron2023afptas}) cannot be solved for general matroids in FPTAS time using the standard ellipsoid methods.

We show unconditional hardness results in the oracle model (even if randomization is allowed), and give analogous lower bounds where the matroids are encoded as part of the input, assuming $\textnormal{P} \neq \textnormal{NP}$. Our construction in \Cref{sec:SAT} can be used to derive hardness results for other matroid problems in non-oracle models.
Specifically, we can obtain in the standard computational model
hardness results analogous to those in the oracle model of \cite{JK82}.
This includes a proof that it is NP-hard to decide if a given matroid is uniform, analogous to the unconditional hardness result in the oracle model of \cite{JK82}. We leave these results for future work. 

Our lower bounds for MOL problems on general matroids call for a more comprehensive study of these problems on restricted classes of matroids. We note the existence of Fully PTASs for MOL problems on some restricted matroid classes, e.g., BM on a laminar matroid
 or KCM on a partition matroid.
The question whether (non-trivial) MOL problems admit Fully PTAS on broader matroid classes,
 such as graphical matroids or linear matroids, remains open.
In particular, it would be interesting to obtain a Fully PTAS for {\sc constrained minimum spanning tree} \cite{HL04} and BM on a linear matroid $-$ or show that one does not exist.

				\newpage

		\bibliographystyle{splncs04}
		\bibliography{bibfile}

\begin{thebibliography}{10}
\providecommand{\url}[1]{\texttt{#1}}
\providecommand{\urlprefix}{URL }
\providecommand{\doi}[1]{https://doi.org/#1}

\bibitem{andersen1996bicriterion}
Andersen, K.A., J{\"o}rnsten, K., Lind, M.: On bicriterion minimal spanning
  trees: An approximation. Computers \& Operations Research  \textbf{23}(12),
  1171--1182 (1996)

\bibitem{BBGS11}
Berger, A., Bonifaci, V., Grandoni, F., Sch{\"a}fer, G.: Budgeted matching and
  budgeted matroid intersection via the gasoline puzzle. Mathematical
  Programming  \textbf{128}(1),  355--372 (2011)

\bibitem{bringmann2021fine}
Bringmann, K., Nakos, V.: A fine-grained perspective on approximating subset
  sum and partition. In: Proceedings of the 2021 ACM-SIAM Symposium on Discrete
  Algorithms (SODA). pp. 1797--1815. SIAM (2021)

\bibitem{calinescu2011maximizing}
Calinescu, G., Chekuri, C., Pal, M., Vondr{\'a}k, J.: Maximizing a monotone
  submodular function subject to a matroid constraint. SIAM Journal on
  Computing  \textbf{40}(6),  1740--1766 (2011)

\bibitem{CGM92}
Camerini, P.M., Galbiati, G., Maffioli, F.: Random pseudo-polynomial algorithms
  for exact matroid problems. Journal of Algorithms  \textbf{13}(2),  258--273
  (1992)

\bibitem{canetti2004random}
Canetti, R., Goldreich, O., Halevi, S.: The random oracle methodology,
  revisited. Journal of the ACM (JACM)  \textbf{51}(4),  557--594 (2004)

\bibitem{caprara2000approximation}
Caprara, A., Kellerer, H., Pferschy, U., Pisinger, D.: Approximation algorithms
  for knapsack problems with cardinality constraints. European Journal of
  Operational Research  \textbf{123}(2),  333--345 (2000)

\bibitem{CCNR13}
Chakaravarthy, V.T., Choudhury, A.R., Natarajan, S.R., Roy, S.: Knapsack cover
  subject to a matroid constraint. In: Proc. FSTTCS (2013)

\bibitem{chang1994random}
Chang, R., Chor, B., Goldreich, O., Hartmanis, J., H{\aa}stad, J., Ranjan, D.,
  Rohatgi, P.: The random oracle hypothesis is false. Journal of Computer and
  System Sciences  \textbf{49}(1),  24--39 (1994)

\bibitem{chekuri2005polynomial}
Chekuri, C., Khanna, S.: A polynomial time approximation scheme for the
  multiple knapsack problem. SIAM Journal on Computing  \textbf{35}(3),
  713--728 (2005)

\bibitem{chekuri2010dependent}
Chekuri, C., Vondr{\'a}k, J., Zenklusen, R.: Dependent randomized rounding via
  exchange properties of combinatorial structures. In: 2010 IEEE 51st Annual
  Symposium on Foundations of Computer Science. pp. 575--584. IEEE (2010)

\bibitem{CVZ11}
Chekuri, C., Vondr{\'a}k, J., Zenklusen, R.: Multi-budgeted matchings and
  matroid intersection via dependent rounding. In: Proc. SODA. pp. 1080--1097
  (2011)

\bibitem{CLRS22}
Cormen, T.H., Leiserson, C.E., Rivest, R.L., Stein, C.: Introduction to
  algorithms. MIT press (2022)

\bibitem{cygan2019problems}
Cygan, M., Mucha, M., Wegrzycki, K., WIodarczyk, M.: On problems equivalent to
  (min,+)-convolution. ACM Transactions on Algorithms (TALG)  \textbf{15}(1),
  1--25 (2019)

\bibitem{DKS23c}
Doron-Arad, I., Kulik, A., Shachnai, H.: An {FPTAS} for budgeted laminar
  matroid independent set. Operations Research Letters  \textbf{51}

\bibitem{doron2023afptas}
Doron-Arad, I., Kulik, A., Shachnai, H.: An afptas for bin packing with
  partition matroid via a new method for lp rounding. In: Approximation,
  Randomization, and Combinatorial Optimization. Algorithms and Techniques
  (APPROX/RANDOM 2023). Schloss-Dagstuhl-Leibniz Zentrum f{\"u}r Informatik
  (2023)

\bibitem{doron2023budgeted}
Doron-Arad, I., Kulik, A., Shachnai, H.: Budgeted matroid maximization: a
  parameterized viewpoint. IPEC  (2023)

\bibitem{DKS23b}
Doron-Arad, I., Kulik, A., Shachnai, H.: An {EPTAS} for budgeted matching and
  budgeted matroid intersection via representative sets. In: Proc. {ICALP}. pp.
  49:1--49:16 (2023)

\bibitem{DKS23}
Doron-Arad, I., Kulik, A., Shachnai, H.: An {EPTAS} for budgeted matroid
  independent set. In: Proc. SOSA. pp. 69--83 (2023)

\bibitem{edmonds1965minimum}
Edmonds, J.: Minimum partition of a matroid into independent subsets. J. Res.
  Nat. Bur. Standards Sect. B  \textbf{69},  67--72 (1965)

\bibitem{FGKSS23}
Fomin, F.V., Golovach, P.A., Korhonen, T., Simonov, K., Stamoulis, G.:
  Fixed-parameter tractability of maximum colored path and beyond. In: Proc.
  SODA. pp. 3700--3712 (2023)

\bibitem{garey1997computers}
Garey, M.R.: Computers and intractability: A guide to the theory of
  np-completeness, freeman. Fundamental  (1997)

\bibitem{grage2019eptas}
Grage, K., Jansen, K., Klein, K.M.: An eptas for machine scheduling with
  bag-constraints. In: The 31st ACM Symposium on Parallelism in Algorithms and
  Architectures. pp. 135--144 (2019)

\bibitem{GZ10}
Grandoni, F., Zenklusen, R.: Approximation schemes for multi-budgeted
  independence systems. In: Proc. ESA. pp. 536--548 (2010)

\bibitem{HL04}
Hassin, R., Levin, A.: An efficient polynomial time approximation scheme for
  the constrained minimum spanning tree problem using matroid intersection.
  SIAM Journal on Computing  \textbf{33}(2),  261--268 (2004)

\bibitem{hong2004fully}
Hong, S.P., Chung, S.J., Park, B.H.: A fully polynomial bicriteria
  approximation scheme for the constrained spanning tree problem. Operations
  Research Letters  \textbf{32}(3),  233--239 (2004)

\bibitem{JK82}
Jensen, P.M., Korte, B.: Complexity of matroid property algorithms. SIAM
  Journal on Computing  \textbf{11}(1),  184--190 (1982)

\bibitem{karp1985complexity}
Karp, R.M., Upfal, E., Wigderson, A.: The complexity of parallel computation on
  matroids. In: 26th Annual Symposium on Foundations of Computer Science (sfcs
  1985). pp. 541--550. IEEE (1985)

\bibitem{kulik2010there}
Kulik, A., Shachnai, H.: There is no eptas for two-dimensional knapsack.
  Information Processing Letters  \textbf{110}(16),  707--710 (2010)

\bibitem{La75}
Lawler, E.L.: Matroid intersection algorithms. Mathematical programming
  \textbf{9}(1),  31--56 (1975)

\bibitem{La79}
Lawler, E.L.: Fast approximation algorithms for knapsack problems. Math. Oper.
  Res.  \textbf{4}(4),  339--356 (1979)

\bibitem{lovasz1978matroid}
Lov{\'a}sz, L.: The matroid matching problem. Algebraic methods in graph theory
   \textbf{2},  495--517 (1978)

\bibitem{Ox2006}
Oxley, J.G.: Matroid theory, vol.~3. Oxford University Press, USA (2006)

\bibitem{PvdP14}
Pendavingh, R., Van~der Pol, J.: On the number of matroids compared to the
  number of sparse paving matroids. arXiv preprint arXiv:1411.0935  (2014)

\bibitem{RG96}
Ravi, R., Goemans, M.X.: The constrained minimum spanning tree problem. In:
  Algorithm Theory—SWAT'96: 5th Scandinavian Workshop on Algorithm Theory
  Reykjav{\'\i}k, Iceland, July 3--5, 1996 Proceedings 5. pp. 66--75. Springer
  (1996)

\bibitem{Sc03}
Schrijver, A.: Combinatorial optimization: polyhedra and efficiency, vol.~24.
  Springer (2003)

\bibitem{Si79}
Sinha, P., Zoltners, A.A.: The multiple-choice knapsack problem. Operations
  Research  \textbf{27}(3),  503--515 (1979)

\bibitem{soto2014simple}
Soto, J.A.: A simple ptas for weighted matroid matching on strongly base
  orderable matroids. Discrete Applied Mathematics  \textbf{164},  406--412
  (2014)

\bibitem{sviridenko2004note}
Sviridenko, M.: A note on maximizing a submodular set function subject to a
  knapsack constraint. Operations Research Letters  \textbf{32}(1),  41--43
  (2004)

\end{thebibliography}
		\appendix
		
	\end{document}